\newtheorem{theorem}{Theorem}
\newcommand{\1}{\mathds{1}}
\newcommand{\ket}[1]{| #1 \rangle}
\newcommand{\bra}[1]{\langle #1 |}
\newcommand{\proj}[1]{\left| #1\right>\left< #1\right|}
\DeclareRobustCommand{\em}{\it}
\DeclareMathOperator{\tr}{tr}
\begin{document}
\title{Multi-partite entanglement speeds up quantum key distribution in networks}
\author{Michael Epping}
\email{epping@hhu.de}
\affiliation{Institute for Quantum Computing, University of Waterloo, 200 University Ave. West, N2L 3G1 Waterloo, Ontario, Canada}
\affiliation{Institut f\"{u}r Theoretische Physik III, Heinrich-Heine-Universit\"{a}t D\"{u}sseldorf, Universit\"{a}tsstr. 1, D-40225
	D\"{u}sseldorf, Germany}
\author{Hermann Kampermann}
\affiliation{Institut f\"{u}r Theoretische Physik III, Heinrich-Heine-Universit\"{a}t D\"{u}sseldorf, Universit\"{a}tsstr. 1, D-40225
	D\"{u}sseldorf, Germany}
\author{Chiara Macchiavello}
\affiliation{Dipartimento di Fisica, Universit\`a di Pavia, and INFN-Sezione di Pavia, via Bassi 6, 27100 Pavia, Italy}
\author{Dagmar Bru\ss}
\affiliation{Institut f\"{u}r Theoretische Physik III, Heinrich-Heine-Universit\"{a}t D\"{u}sseldorf, Universit\"{a}tsstr. 1, D-40225
D\"{u}sseldorf, Germany}
\pacs{03.67.Dd,03.67.Bg,03.67.Pp} 

\begin{abstract}
The laws of quantum mechanics allow for the distribution of
a secret random key between two parties.
Here we analyse the security of a protocol for
establishing a common secret key between N parties (i.e. a conference key),
using resource states with genuine N-partite entanglement.
We compare this protocol to conference key distribution via
bipartite entanglement, regarding the required resources,
 achievable secret key rates and threshold qubit error rates. Furthermore we discuss quantum networks with bottlenecks for which our multipartite entanglement-based protocol can benefit from network coding, while the bipartite protocol cannot. It is shown how this advantage leads to a higher secret key rate.
\end{abstract}

\maketitle
\noindent

In the quantum world, randomness and security
are built-in properties~\cite{G+02,DLH06,B+07}: two parties may establish
a random secret key by exploiting
the no-cloning theorem~\cite{WZ82}, as in the  BB84 protocol \cite{BB84}, or by
using the monogamy of entanglement~\cite{CKW00}, as in the Ekert protocol~\cite{Ekert91}.
Several variations of these seminal
protocols have been suggested~\cite{Bruss98,GG02,ZSW08,LCQ12,VV14}, and  their security has been analysed
in detail~\cite{LC99,SP00,May01,Renner05,Sca+09,TL15,CML16}.\\
In the advent of quantum technologies, much
effort is devoted to building
quantum networks~\cite{ACL07,SECOQC09,LOW10,Sas+11,Met+13,Sat+16} and
creating global quantum states across them~\cite{EKB16,EKB16b}. Thus, the generalization of quantum key distribution (QKD) to  multipartite scenarios
is topical. In order to establish a common secret key (the {\em conference key})
for $N$ parties, one
 can follow mainly two different paths: building up the multipartite key
 from bipartite QKD links (2QKD)~\cite{SS03}, see Fig.~\ref{fig:bipartite},
or exploiting correlations of
genuinely multipartite entangled states (NQKD)~\cite{Cabello00,SG01,CL05,Fu15},
see Fig.~\ref{fig:multipartite}. \\
\begin{figure}[tp]%
{ }\hfill\subfigure[Establishing a conference key with bipartite entanglement (2QKD).\label{fig:bipartite}]{\includegraphics[scale=1.0]{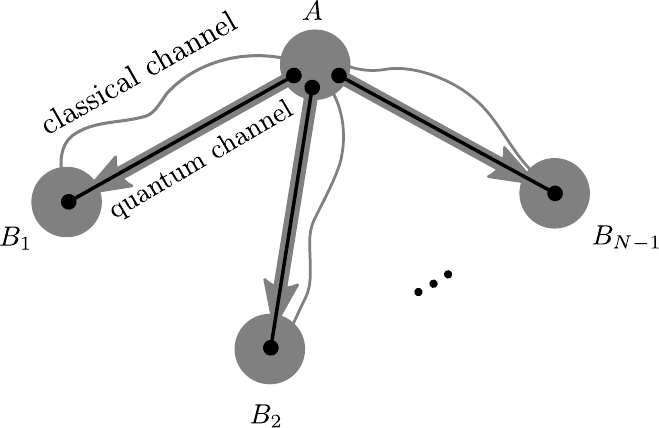}} \hfill %
\subfigure[Establishing a conference key with multipartite entanglement (NQKD).\label{fig:multipartite}]{\includegraphics[scale=1.0]{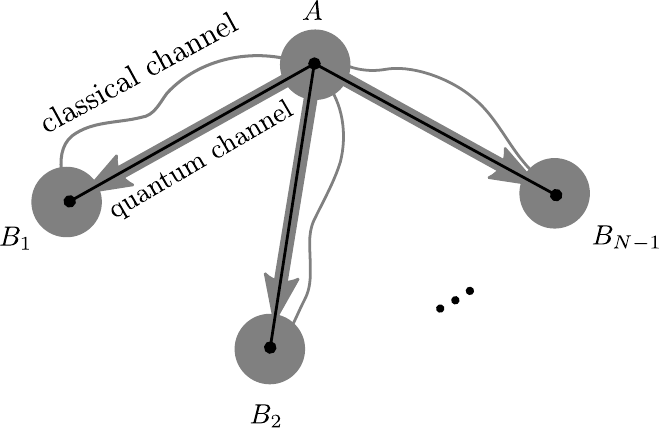}} \hfill { }%
\caption{The setup for $N$-partite conference key distribution. Black disks are qubits and the black lines connecting them indicate entanglement. Here, all quantum states are produced by Alice (A), who sends one subsystem to each of the other parties $B_i$. Both protocols require additional classical communication which is sent via open but authenticated channels. The grey background indicates the network infrastructure, i.e. the channels and nodes.
}\label{fig:schemes}%
\end{figure}%
In this article we provide an information theoretic security analysis of NQKD, by
generalising methods developed for 2QKD in~\cite{Renner05,RGK05},
and perform an analytical calculation of secret key rates.
This enables us to quantitatively compare the two approaches;
we find that NQKD may outperform 2QKD, for example in networks with bottlenecks.

The article is structured as follows. In the Section~\ref{sec:NQKD} we introduce the NQKD protocol and its prepare-and-measure variant, perform a detailed security analysis and the secret key rate calculation. In Section~\ref{sec:implementation} we define the 2QKD protocol,  summarise the steps of the NQKD protocol in an implementation and calculate the secret key rate for the example of a depolarised state. We explicitly model noise introduced by imperfect gates and channels in order to compare the performance of the two different approaches. Quantum networks are discussed in Section~\ref{sec:networks}. The article concludes with a discussion of the results.
\section{Multipartite QKD: protocol and security analysis}\label{sec:NQKD}
The entanglement-based Ekert protocol~\cite{Ekert91} can be generalised
to $N$ parties as follows, see also \cite{CL05}.  The parties
$A$ and $B_1$, $B_2$, ..., $B_{N-1}$  share an $N$-partite entangled state
and perform local projective measurements.
The best performance in the ideal (noiseless) case is ensured
if one requires that the measurement outcomes of all parties are perfectly correlated
for one set of local bases -- which we can choose
without loss of generality to be the
$Z$-bases --
and occur with a uniform distribution.
 The only pure  $N$-qubit quantum state that
fulfils these requirements is the Greenberger-Horne-Zeilinger (GHZ) state~\cite{GHZ07};
however, for  $N\geq3$, the existence of perfect correlations in one set of
 bases forbids perfect
correlations (even only pairwise) in any other bases,
see Appendix~\ref{app:GHZ}. We remark that other protocols with less
than perfectly correlated resource states are possible, but will
introduce intrinsic errors~\cite{ZXP15}.\\

\subsection{The protocol for N-party quantum conference key distribution} 
The protocol for N-party quantum conference key distribution (NQKD), with $N\geq 2$, consists of the following basic steps:
\begin{itemize}
\item[1)] {\sl State preparation:} The parties $A$ and $B_i$, ${i=1,2,...,N-1}$, share the $N$-qubit GHZ state
\begin{equation}
\ket{GHZ} = \frac{1}{\sqrt{2}}\left(\ket{0}^{\otimes N} + \ket{1}^{\otimes N}\right).\label{eq:GHZ}
\end{equation}
\item[2)] {\sl Measurement:}
	There are two types of measurements. First type:
Party $A$ and parties $B_i$ measure their respective qubits in the $Z$-basis. Second type: They measure randomly, with equal probability, in the $X$- or $Y$-basis. Similar to the standard bipartite QKD protocol~\cite{Lo2005}, the latter case is much less frequent. The parties know the type of the measurement from a short pre-shared random key.

\item[3)] {\sl Parameter estimation:}
	The parties announce the measurement bases and outcomes for the second type and an equal number of randomly chosen rounds of the first type. The announced data allows to estimate the parameters $Q_X$ and $Q_Z$, which determine the secret key rate, see below.
\item[4)] {\sl Classical post-processing:} As
 in the bipartite protocol,
error correction and privacy amplification is performed, for the
details see below.
\end{itemize}
Note that the state preparation in step 1) can be achieved by locally preparing the GHZ-state at Alice's site and sending qubits to the Bobs (see Fig.~\ref{fig:multipartite}), or any suitable sub-protocol that achieves the same task. We analyse the distribution via quantum repeaters~\cite{EKB16} and quantum
network coding~\cite{EKB16b} below.\\
In the following section we briefly discuss prepare-and-measure variants of conference key distribution. Because the security proof of the NQKD protocol is done in the entanglement-based picture, this description is not necessary for understanding the rest of the paper.\\

\subsection{N-party prepare-and-measure schemes}
We now sketch two different prepare-and-measure schemes for conference key distribution.\\
{\sl 1) Preparing and measuring single qubits:} Single qubits are experimentally easier to prepare and
to distribute than entangled states. Thus, establishing a
conference key for $N$ parties by using single qubits is
an interesting possibility, which has been studied for the
case $N=3$ in  \cite{Ryu07}.\\
The protocol proceeds in complete analogy to the case of
$N=2$, e.g. for BB84~\cite{BB84}: Alice prepares randomly $N-1$ copies of
a state $\ket{\phi_k}, k=1,...4,$ taken from the set
$S_{BB84}=\{\ket{0},\ket{1},\ket{+},\ket{-}\}$,
with $\ket{\pm}=1/\sqrt{2}(\ket{0}\pm\ket{1})$. She sends each
party $B_i$, with $i=1,...,N$, one of the copies.  Each party
$B_i$ measures in the $Z$- or $X$-basis.
In the sifting step, the $N$ parties keep only those cases
where all parties used the same basis, and thus
establish a joint key.\\
We point out, however, that the secret key rate
in this scenario decreases
with increasing $N$, even for perfect channels and measurements,
and goes to zero for $N\rightarrow\infty$: an eavesdropper
can eavesdrop on all $N-1$ sent states at the same time,
i.e. she has to distinguish the four global states
$ |\phi_k>^{\otimes (N-1)}$, pairs of which have either overlap 0 or
$(1/\sqrt{2})^{N-1}$, i.e. the distinguishability increases with increasing
$N$. In the limit of infinite $N$ the four global states are
orthogonal and therefore perfectly distinguishable.

Thus, this prepare-and-measure-scheme is (for $N\geq 3$) {\em not} equivalent
to entanglement-based NQKD as described in the present article.

{\sl 2) Prepare-and-measure equivalent of NQKD:} The entanglement-based protocol NQKD described above can be
formulated
as a prepare-and-measure protocol, analogous to the six-state protocol~\cite{Bruss98}. Instead of producing the GHZ state of Eq.~(\ref{eq:GHZ}) and measuring her qubit afterwards, Alice can directly produce the ($N-1$)-qubit projection of the GHZ state according to her fictitious, random outcome.
Thus, for the $X$-, $Y$- and $Z$-basis, the six different $(N-1)$-qubit states she distributes among the Bobs, are
\begin{equation}
\begin{aligned}
\ket{\psi_{x,\pm}}=&\frac{1}{\sqrt{2}}(\ket{00...0}\pm \ket{11...1}),\\
\ket{\psi_{y,\pm}}=&\frac{1}{\sqrt{2}}(\ket{00...0}\mp i\ket{11...1}),\\
\ket{\psi_{z,0}}=&\ket{00...0} \text{ and }\ket{\psi_{z,1}}=\ket{11...1}.
\end{aligned}
\end{equation}
This protocol is equivalent to NQKD,
because it reproduces the correlations between $A$, $B_1$, ... $B_{N-1}$. Note that the six-state protocol is included as the special case $N=2$. The described protocol uses $(N-1)$-partite entanglement for four of the sent states, which are, however, sent much less frequent than the two product states. This fact renders an experimental implementation of our protocol more realistic than the entanglement-based description might suggest.\\
In the remainder of this article we use the equivalent entanglement-based description of the NQKD protocol.\\

\subsection{Security analysis of the N-party quantum key distribution} 
The composable security definition of the bipartite
scenario~\cite{Renner05,Sca+09} can be generalised
in an analogous way to  the $N$-partite case.
Our security analysis proceeds along analogous lines as the
bipartite case in \cite{RGK05}. See Appendix~\ref{app:security} for explicit details of these generalisations. By employing this security definition and using one-way communication only, we prove secrecy of the key under the most general eavesdropping attack allowed by the laws of quantum mechanics, so-called {\it coherent attacks}~\cite{CIRAC19971, PhysRevA.59.4238}, independent of the context in which the key is used.\\
In the asymptotic limit, i.e. for infinitely
many rounds,
the secret fraction $r_{\infty}$, i.e. the ratio of secret bits and the number of shared states (without parameter estimation rounds), is given by
\begin{equation}
\label{eq:secretfraction}
r_{\infty} = \sup_{U \leftarrow  K}\inf_{\sigma_{A\{B_i\}} \in \Gamma}
[S (U|E) - \max_{i\in\{1,...N-1\}} H( U|K_i)],
\end{equation}
where $U \leftarrow K$ denotes a bitwise preprocessing channel on Alice's raw key bit $K$,
$S(U|E)$ is the conditional von-Neumann entropy of the (classical) key variable, given the state of Eve's system $E$,
$H(U|K_i)$ is the conditional Shannon entropy of $U$ given $K_i$, which is $B_i$'s guess of $K$, and $\Gamma$ is the set of all density matrices $\sigma_{A\{B_i\}}$ of Alice and the Bobs which are consistent with the parameter estimation. The secret key rate is
\begin{equation}
R=r_\infty R_{\mathrm{rep}},\label{eq:keyrate}
\end{equation}
where the repetition rate $R_{\mathrm{rep}}=\frac{1}{t_{\mathrm{rep}}}$ is given by the time $t_{\mathrm{rep}}$ that one round (steps 1 and 2) takes. For now we set $t_{\mathrm{rep}}=1$. The secret key rate in Eq.~(\ref{eq:keyrate}) as a figure of merit does not directly account for the amount of needed local randomness, classical communication, qubits and gates. Depending on the context one might want to incorporate one or more of the former quantities into a cost-performance ratio as a figure of merit~\cite{PhysRevLett.112.250501}.\\
 Note that we have not assumed any symmetry about the quality of the channels connecting $A$ and $B_i$.
Therefore,
the worst-case
information leakage in the error correction step is determined by the noisiest channel, see the maximisation in the last term
of Eq. (\ref{eq:secretfraction}).
This is the main difference with respect to the bipartite case.\\

\subsection{The secret key rate} 
We now derive an analytical formula for the multipartite secret key rate based on a variant of the method of depolarisation~\cite{RGK05}. In practice, the described depolarisation operations will be applied to the classical data only, as described in detail below. Readers who are not interested in the technical details can skip to Eq.~(\ref{eq:rdep}).\\
Let us denote the GHZ basis of $N$ qubits as follows:
 \begin{equation}
 \label{ghz-basis}
 \ket{\psi_j^\pm} = \frac{1}{\sqrt{2}}(\ket{0}\ket{j}\pm \ket{1}\ket{\bar j})\ ,
 \end{equation}
 where $j$ takes the values $0,...,2^{N-1}-1$ in binary notation,
 and  $\bar j$ denotes the binary negation of $j$; i.e. for example
 if  $j = 01101$
 then ${\bar j} = 10010$. \\
 Remember that any state of $N$ qubits can be depolarised to a state which is diagonal in the GHZ basis by a sequence of local operations~\cite{DCT99,Duer+00}.
 In our protocol we introduce the following \textit{extended depolarisation procedure}. The set of depolarisation operators is
 \begin{equation}
 \mathcal{D}=\{X^{\otimes N}\} \cup \{Z_A Z_{B_j}|1\leq j \leq N-1 \} \cup \{R_k |1\leq k \leq N-1 \}, \label{eq:depolarisationoperators}
 \end{equation}
 where $X$ and $Z$ are Pauli operators and
 \begin{equation}
 R_k=\mathrm{diag}(1,i)_A \otimes \mathrm{diag}(1,-i)_{B_k}.
 \end{equation}
 The parties apply each of these operators with probability $1/2$ or $\1$ else.
 The operators from the first two sets of
 Eq.~(\ref{eq:depolarisationoperators}) make the density matrix GHZ diagonal as in~\cite{DCT99,Duer+00}. We denote the coefficient in front of $\proj{\psi_j^\sigma}$ by $\lambda_j^\sigma$ with $\sigma\in\{+,-\}$ and $j\in\{0,1,...,2^{N-1}-1\}$. The effect of $R_k$ is
 \begin{alignat}{2}
R_k \ket{\psi_j^\sigma}=&\left\{  \begin{array}{ll}
 \ket{\psi_j^\sigma} & \text{if } j^{(k)} = 0\\
 -i \ket{\psi_j^{-\sigma}} & \text{if } j^{(k)} = 1\\
 \end{array}\right. ,
 \intertext{so applying this operator with probability $\frac{1}{2}$ transforms}
  \lambda_j^\sigma \xrightarrow{}&\left\{  \begin{array}{ll}
 \lambda_j^\sigma & \text{if } j^{(k)} = 0\\
 \frac{1}{2}(\lambda_j^{-\sigma}+\lambda_j^{\sigma}) & \text{if } j^{(k)} = 1\\
 \end{array}\right. ,
 \end{alignat}
 where $j^{(k)}$ denotes the $k$th bit of the string $j$. As this operation is applied for all $k=1,2,...,N-1$, it achieves that
 \begin{equation}
 \lambda_j^+=\lambda_j^- \text{ for all $j>0$.}\label{eq:equallambdas}
 \end{equation}
 The resulting depolarised state reads
 \begin{equation}
 \label{ghz-diagonal}
 \rho_\text{dep} = \lambda_0^+ \proj{\psi_0^+} + \lambda_0^- \proj{\psi_0^-}+\sum_{j=1}^{2^{N-1}-1}\lambda_{j}(\proj{\psi_j^+}+\proj{\psi_j^-}).
 \end{equation}
 In our multipartite scenario we define the qubit error rate (QBER) $Q_Z$ to be the probability that at least one Bob obtains a different outcome than Alice in a $Z$-basis measurement. Note that this value is not the same as the bipartite qubit error rate $Q_{AB_i}$, which is the probability that the $Z$-measurement outcome of $B_i$ disagrees with the one of Alice. $Q_Z$ can be read directly from the structure of the depolarised state in Eq.~(\ref{ghz-diagonal}) and is given by
 \begin{equation}
 Q_Z = 1- \lambda_0^+-\lambda_0^- \ .\label{eq:QBER}
 \end{equation}
 For simplicity we neglect the possibility of increasing the key rate by adding pre-processing noise, i.e. we set $q=0$ in the notation of \cite{RGK05} such that $\mathbf{U}=\mathbf{K}$.
 Because
 \begin{equation}
 \begin{aligned}
 S(K|E)=&S(E|K)-S(E)+H(K)\\
 \text{and }H(K|K_i)=&H(K_i|K)-H(K_i)+H(K)
 \end{aligned}
 \end{equation}
 the asymptotic secret fraction is
 \begin{equation}
 \label{eq:askeyrate}
 r_{\infty} = S(E|K) - S(E) - \max_{1\leq i\leq N-1} (H(K_i|K) - H(K_i)).
 \end{equation}
 Note that we did not need to include the infimum over $\Gamma$, see Eq.~(\ref{eq:secretfraction}), here because, as we will see below, the measurement statistics completely determine all relevant quantities in our protocol.
 The entropies involving the classical random variable $K$ are directly obtained from the measurement statistics in the parameter estimation phase. They are given by
 \begin{equation}
 H(K|K_i)=h(Q_{AB_i}),
 \end{equation}
 with the binary Shannon entropy
 \begin{equation}
 h(p)=-p \log_2 p-(1-p) \log_2(1-p)
 \end{equation}
 and the bipartite error rate $Q_{AB_i}$, given by
 \begin{equation}
 Q_{AB_i}=\sum_{\substack{j\\ j^{(i)}=1}}\sum_{\sigma=\pm} \lambda_j^\sigma
 \overset{Eq.~(\ref{eq:equallambdas})}{=} 2\sum_{\substack{j\\ j^{(i)}=1}}\lambda_j,
 \end{equation}
 where $j^{(i)}$ denotes the $i$-th bit of $j$ and, because both outcomes are equiprobable, 
 \begin{equation}
 H(K_i)=1.
 \end{equation}
 Giving Eve the purification of Eq.~(\ref{ghz-diagonal}), the von-Neumann entropies involving Eve's system in Eq.~(\ref{eq:askeyrate}) are given by
 \begin{align}
 S(E|K)\overset{\phantom{Eq.~(\ref{eq:equallambdas})}}{=}&\frac{1}{2} S(E|K=0)+ \frac{1}{2} S(E|K=1)\nonumber \\
 \overset{\phantom{Eq.~(\ref{eq:equallambdas})}}{=}&-\sum_{i=0}^{2^{N-1}-1} (\lambda_i^+ + \lambda_i^-)\log_2 (\lambda_i^+ + \lambda_i^-)\nonumber\\
 \overset{Eq.~(\ref{eq:equallambdas})}{=}
 &-(1-Q_Z)\log_2(1-Q_Z)-2\sum_{i=1}^{2^{N-1}-1} \lambda_i\log_2 (\lambda_i) - Q_Z
 \label{eq:SEUnosymmetry}
 \end{align}
 and
 \begin{align}
 S(E)\overset{\phantom{Eq.~(\ref{eq:equallambdas})}}{=}&S(\frac{1}{2}(\sigma_E^0 + \sigma_E^1))
 =-\sum_{j,\sigma=\pm} \lambda_j^{\sigma} \log_2 \lambda_j^{\sigma}\nonumber\\
 \overset{Eq.~(\ref{eq:equallambdas})}{=}&-\lambda_0^+\log_2\lambda_0^+-\lambda_0^-\log_2\lambda_0^- -2 \sum_{j>0} \lambda_j \log_2 \lambda_j
 \label{eq:SEnosymmetry},
 \end{align}
 i.e.
 \begin{align}
 S(E|K)-S(E)
 =&-Q_Z-(1-Q_Z)\log_2(1-Q_Z)+\lambda_0^+\log_2\lambda_0^++(1-Q_Z-\lambda_0^+)\log_2(1-Q_Z-\lambda_0^+). \label{eq:difference}
 \end{align}
 Now $\lambda_0^+$ and $\lambda_0^-$ can be obtained with the additional $X^{\otimes N}$ measurement in the parameter estimation, because $\lambda_0^++\lambda_0^-=1-Q_Z=\tr \left(\rho_{\mathrm{dep}} (\proj{0}^{\otimes N}+\proj{1}^{\otimes N})\right)$ is known from the QBER and $\tr\left( \rho_{\mathrm{dep}} X^{\otimes N}\right) = \sum_j (\lambda_j^+-\lambda_j^-)=\lambda_0^+-\lambda_0^-$. In analogy to $Q_Z$ we denote the probability that the $X$-measurement gives an unexpected result, i.e. one that is incompatible with the noiseless state, by $Q_X$. Because $\bra{\psi_j^\sigma}X^{\otimes N}\ket{\psi_j^\sigma}=\sigma$ this leads to
 \begin{equation}
 Q_X=\frac{1-\langle X^{\otimes N}\rangle_{\mathrm{dep}}}{2},
 \end{equation}
 which can, as we will see in Section~\ref{sec:implementationofprotocol}, be obtained from the measured data in the parameter estimation step. We remark that $Q_X$ is not the probability that at least one Bob gets a
 different $X$-measurement outcome than Alice, as the outcomes are
 not correlated, see Appendix~\ref{app:GHZ}.\\
 Finally, inserting Eq.~(\ref{eq:difference}) into Eq.~(\ref{eq:askeyrate}), and using Eq.~(\ref{eq:keyrate}),
 we arrive at the achievable secret key rate,
\begin{equation}
\begin{aligned}
R=&\hphantom{{}+{}}\left(1- \frac{Q_Z}{2} - Q_X\right) \log_2\left(1- \frac{Q_Z}{2} - Q_X\right)
 + \left(Q_X-\frac{Q_Z}{2}\right) \log_2\left(Q_X-\frac{Q_Z}{2}\right) \\& + (1-Q_Z) (1 - \log_2(1-Q_Z))- h(\max_{1\leq i \leq N-1}Q_{AB_i}).
\end{aligned}\label{eq:rdep}
\end{equation}
Note that the parameters in this equation are obtained from the measured data and will depend on the number of parties $N$.\\

\section{Implementation and noise}\label{sec:implementation}
In this section we compare the multipartite-entanglement-based protocol (NQKD) as introduced above to a protocol based on bipartite entanglement (2QKD), which we define in the following.
\subsection{Conference key distribution with bipartite entangled quantum states (2QKD)}
A suitable protocol
to establish
a secret joint key for $N>2$  parties via bipartite entanglement
proceeds as follows, see Fig.~\ref{fig:bipartite}:
Party $A$ shares a Bell
state with each of the $N-1$ parties $B_i$ and establishes a
(different)
secret bipartite key ${\bf S}_i$ with each party $B_i$. For concreteness, we assume in our comparison that the six-state protocol~\cite{Bruss98} is used.
In general, the
$N-1$ channels may be different and thus have individual
QBERs. Party $A$ then defines a new random key ${\bf k}_c$ to be
the conference key. She sends the encoded conference key
${\bf k}_i = {\bf S}_i \oplus {\bf k}_c$ to party $B_i$ who
performs ${\bf k}_i \oplus {\bf S}_i = {\bf k}_c$ and thus regains
the conference key.\\
A comparison of the performance of the bipartite versus the multipartite
entanglement-based strategy for
$N$ parties is subtle and has to consider various aspects,
as different resources are needed: on one hand only  bipartite
entanglement is needed for 2QKD, while multipartite
entanglement is needed for NQKD.
(Note, however, that the number of necessary two-qubit
gates for generation of the entangled states is in both cases $N-1$.)
On the other hand, the number of resource
qubits per round is $2(N-1)$ for 2QKD, while
only $N$ qubits are needed for NQKD.
Finally, the 2QKD protocol requires to transmit $(N-1)$ additional classical bits (the encoded conference key). Thus, each of the two strategies
has its own advantages. A quantitative comparison regarding imperfections in preparation and transmission is discussed below.\\

\subsection{Implementation of the NQKD protocol} \label{sec:implementationofprotocol}
We now describe how the depolarisation operations used in the security proof can effectively be implemented classically by adjusting the protocol.\\
For key generation and the $Q_Z$ estimation, the parties perform $Z^{\otimes N}$-measurements. These are only affected by the $X^{\otimes N}$ depolarisation operator, which flips the outcomes of all parties. It can therefore be implemented on the classical data. The other depolarisation operators are diagonal in the $Z$-basis and thus do not change the $Z$-measurement outcome.\\
Let us call the parameter estimation rounds, in which the parties measure $X^{\otimes N}$ (after depolarisation), estimation rounds of the second type. How the depolarisation step affects the $X^{\otimes N}$-measurement is not so obvious and is described in the following.\\
Note that the depolarisation operators $X^{\otimes N}$ and $Z_AZ_{B_k}$, $k=1,2,...,N-1$ (see Eq.~(\ref{eq:depolarisationoperators})), commute with the $X^{\otimes N}$-measurement and thus these depolarisation operators do not have an effect in second type rounds. But
\begin{equation}
R_k X_A X_{B_k} R_k^\dagger = (-Y_A) Y_{B_k}
\end{equation}
i.e. applying the depolarisation operator $R_k$ is equivalent to Bob $k$ measuring in $Y$-basis. Also note that
\begin{equation}
R_k (-Y_A) X_{B_k} R_k^\dagger = (-X_A) Y_{B_k},
\end{equation}
so let $\kappa_j$ be the number of Bobs measuring in $Y$-basis in the $j$-th round, then Alice measures in the basis
\begin{equation}
M_A(\kappa) = \left\{\begin{array}{cl}
X_A & \text{if } \kappa_j \bmod 4 = 0\\
-Y_A & \text{if }\kappa_j \bmod 4 = 1\\
-X_A & \text{if }\kappa_j \bmod 4 = 2\\
Y_A & \text{if }\kappa_j \bmod 4 = 3
\end{array}
\right. ,
\end{equation}
where a minus sign corresponds to a flip of the measurement outcome. Note that this measurement rule for Alice implies that always an even number of parties measures in $Y$-basis and that the outcome of the measurement is flipped whenever it is not a multiple of four. 
Each party measures in $X$ or $Y$ basis with probability $1/2$. Note that the rule for $M_A$ described above means that only half of all possible combinations of these measurement bases are actually measured. However, in practice the parties can measure $X$ and $Y$ independently with probability $1/2$ and throw away half of their data (where an odd number of parties has measured in $Y$-basis) and Alice still flips her measurement outcome whenever the number of parties measuring in $Y$-basis was not a multiple of four. This is not a problem, because in the parameter estimation rounds each party announces its measurement setting and outcome. We thus arrive at the implementation described initially.
Let $\tilde{\kappa}_j$ be the number of parties measuring in $Y$-basis in run $j$, i.e.
\begin{equation}
\tilde{\kappa}_j = \left\{\begin{array}{cl}
\kappa_j + 1 & \text{if Alice measured in $Y$-basis}\\
\kappa_j & \text{else}
\end{array} \right.,
\end{equation}
then
\begin{align}
\langle X^{\otimes N} \rangle_{\mathrm{dep}} =&\lim_{\text{\#exp}\rightarrow\infty} \frac{1}{\text{\#exp}} \sum_{j=1}^{\text{\#exp}} f(\tilde{\kappa}_j) \prod_{i=1}^N a_{i,j} \\
=&\lim_{\text{\#exp}\rightarrow\infty} \frac{n_+-n_-}{n_++n_-} , \label{Eq:DetOfXN}
\end{align}
where \#exp is the number of experiments in the second type rounds with even $\tilde{\kappa}_j$, $a_{i,j}$ is the outcome of party $i$ in experiment $j$, 
\begin{equation}
f(\tilde{\kappa})=\left\{\begin{array}{cl}
0 & \text{if $\tilde{\kappa}_j$ odd}\\
1 & \text{if }\tilde{\kappa}_j \bmod 4 = 0\\
-1 & \text{else}
\end{array}\right. \label{eq:signofkappa}
\end{equation}
and
\begin{equation}
n_{\pm} = \frac{1}{2} \text{\#exp}\pm \frac{1}{2}\sum_{j=1}^{\text{\#exp}} f(\tilde{\kappa})\prod_{i=1}^N a_{i,j}. 
\end{equation}
We remark that, in contrast to full tomography, the number of rounds needed to get sufficient statistics for estimating $\langle X^{\otimes N}\rangle_{\mathrm{dep}}$ does not increase with the number of parties $N$.\\[1ex]
Let us summarise the steps of an implementation of the NQKD protocol:
\begin{enumerate}
	\item Distribution of the state GHZ state $\ket{\psi_0^+}$.
	\item $L\cdot h(p_p)$ bits of pre-shared key are used to mark the second type rounds, where $L$ is the total number of rounds and $p_p$ is the probability for an $X^{\otimes N}$-round. This amount of key suffices, because an $L$-bit binary string with a $1$ for each second type round can asymptotically be compressed to $L\cdot h(p_p)$ bits.
	\item In each second type round each party measures randomly in the X- or $Y$-basis.
	\item In all other cases all parties measure in $Z$-direction.
	\item Parameter estimation:
	\begin{enumerate}
		\item Alice announces a randomly chosen small subset of size $L\cdot h(p_p)$ of $Z$-measurement rounds, in which all parties announce their $Z$-measurement results. From this data the QBER $Q_Z$ and the individual QBER's $Q_{AB_i}$ are estimated.
		\item The parties announce the measurement results of the second type rounds together with the chosen measurement basis. Alice flips her outcome if the number of parties who measured in $Y$-basis is not a multiple of four (see Eq.~(\ref{eq:signofkappa})). From the data where an even number of parties measured in $Y$-basis (including zero), the parameter $Q_X$ is calculated according to Eq.~(\ref{Eq:DetOfXN}).
	\end{enumerate}
	\item Alice announces which $Z$-measurement results all parties have to flip (the probability for each bit is $1/2$). This effectively implements the depolarisation with operator $X^{\otimes N}$.
	\item Classical post-processing:
	\begin{enumerate}
		\item Alice sends error correction information (for $\max_i Q_{AB_i}$) to all Bobs, which perform the error correction.
		\item In privacy amplification the parties obtain the key by applying a two-universal hash function, which was chosen randomly by Alice, to the error corrected data.
	\end{enumerate}
	\item The achievable key rate is then given by Eq.~(\ref{eq:rdep}).
\end{enumerate}

\subsection{Example of depolarising noise}
In this section we assume that $\rho_{AB_1...B_{N-1}}$ is a mixture of the GHZ-state and white noise, i.e. the parties share the state
\begin{equation}
\rho = \lambda_0^+ \proj{\psi_0^+} + \frac{1-\lambda_0^+}{2^N-1} (\1-\proj{\psi_0^+}). \label{eq:wernerstate}
\end{equation}
Here all coefficients other than $\lambda_0^+$ are equal, i.e. $\lambda_j^\pm = \lambda_0^- =Q_Z/(2^N-2)$ for $j=1,..., 2^{(N-1)}-1$ and $\lambda_0^+ = 1-Q_Z\frac{2^N-1}{2^N-2}$. The rate of unexpected results for the $X^{\otimes N}$-measurement is thus given by
\begin{equation}
Q_X=\frac{2^{N-2}}{2^{N-1}-1}Q_Z.
\end{equation}
For the highly symmetric state of Eq.~(\ref{eq:wernerstate}) the key rate is then a function of $Q_Z$ and $N$ only. The terms in Eq.~(\ref{eq:askeyrate}) are
\begin{align}
Q_{AB_i}=& \frac{2^{N-1}}{2^N-2} Q_Z,\\
S(E|U)=& -(1-Q_Z) \log_2(1-Q_Z) - Q_Z \log_2 \frac{2Q_Z}{2^N-2}\\
\text{and }S(E)=&-(1-Q_Z\frac{2^N-1}{2^N-2})\log_2 (1-Q_Z\frac{2^N-1}{2^N-2})\nonumber\\
& - (2^N-1)\frac{Q_Z}{2^N-2} \log_2 (\frac{Q_Z}{2^N-2})
\end{align}
and inserting them into Eq. (\ref{eq:askeyrate}) leads to the asymptotic secret key rate as
function of $Q=Q_Z$ and $N$,  namely
\begin{equation}%
\label{eq:rate}
R(Q,N)=1 + h(Q)- 
h\left(Q \frac{2^N - 1}{2^N - 2}\right) - h\left(Q \frac{2^{N - 1}}{2^N - 2}\right)
+    \left(\log_2(2^{N - 1} - 1) - \frac{2^N - 1}{2^N - 2} \log_2 (2^N - 1)\right) Q.
\end{equation}%
This function is shown in Fig.~\ref{fig:keyrates}.
\begin{figure}[tbp] %
	\centering %
	\subfigure[Key rates (Eq.~(\ref{eq:rate})) for $N=2,3,...,8$ (left to right) as a function of the QBER $Q_Z$ of a depolarised state (see Eq.~(\ref{eq:wernerstate})). The dashed line corresponds to the limit $N\rightarrow \infty$.\label{fig:keyrates}]{\includegraphics[width=0.49\linewidth]{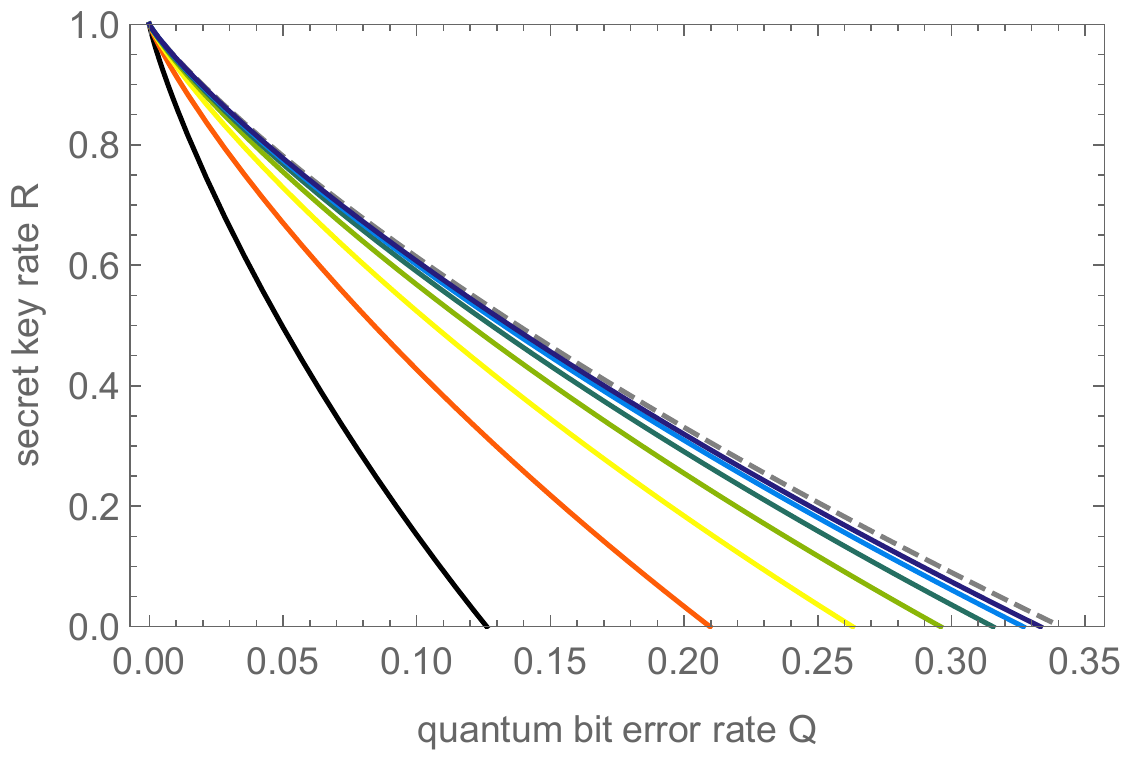}}\hfill%
	\subfigure[Key rates for $N=2,3,4,...,8$ parties (right to left) as a function of the two-qubit gate failure probability $f_G$.\label{fig:keyratesfG}]{\includegraphics[width=0.49\linewidth]{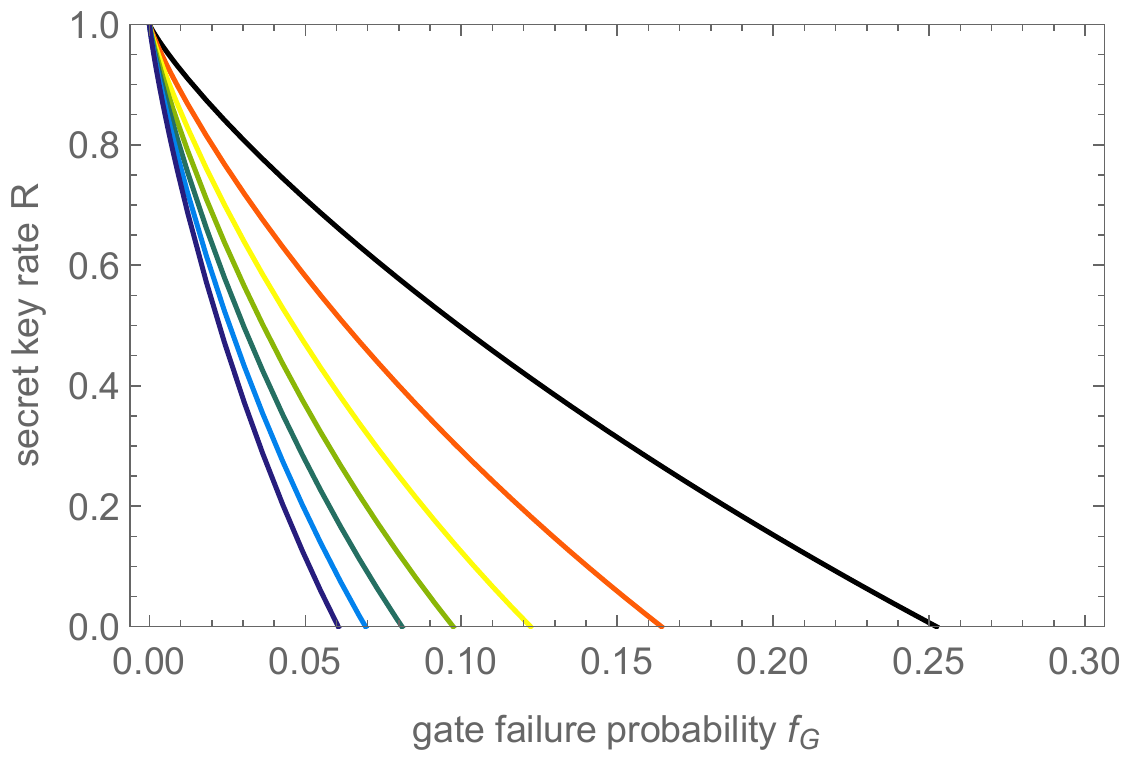}}
	\caption{The secret key rate of the NQKD protocol as a function of the QBER (a) and the gate failure probability (b).}\label{fig:keyratePlots}
\end{figure}%
For $N=2$ the key rate coincides with the one of the six-state protocol~\cite{Bruss98,RGK05}, namely
\begin{equation}
R(Q,2)= 1-h\left(\frac{3}{2}Q\right)-\frac{3 \log_2 3}{2} Q.
\end{equation}
In the limit of large $N$ the key rate simplifies to
\begin{equation}%
R(Q,\infty)=1-h\left(\frac{Q}{2}\right)-Q. %
\end{equation}%
We also numerically determined the threshold values for the QBER, i.e. the value of $Q$ until which a non-zero secret key rate is achievable, for different numbers of parties $N$, see Table~\ref{tab:threshold}.%
\begin{table}%
	\caption{Threshold values of the multipartite entanglement based protocol (NQKD) without preprocessing noise for different numbers of Parties $N$.
		The well-known bipartite case, i.e. $N=2$,  is also given for comparison.
		A non-zero secret key can be distilled if the QBER is below the listed value.}\label{tab:threshold}%
	\begin{tabular}{cc}%
		N & Threshold QBER\\%
		\hline%
		2 & 0.126193\\%
		3 & 0.209716\\%
		4 & 0.263087\\%
		5 & 0.295974\\%
		6 & 0.315562\\%
		7 & 0.326892\\%
		8 & 0.333296\\%
		9 & 0.336851\\%
		10 & 0.338799%
	\end{tabular}\hspace{1cm}\begin{tabular}{cc}%
		N & Threshold QBER\\%
		\hline%
		11 & 0.339855\\%
		12 & 0.340424\\%
		13 & 0.340728\\%
		14 & 0.340890\\%
		15 & 0.340976\\%
		16 & 0.341021\\%
		17 & 0.341045\\%
		\vdots  & \\%
		$\infty$ & 0.341071%
	\end{tabular}%
\end{table} %
Note that for fixed $Q$ the key rate increases with the number of parties $N$. However, one might expect that in practice the QBER is not constant but increases with increasing number of parties $N$ (because the experimental creation of the $N$-partite GHZ state becomes more demanding). This intuition is discussed quantitatively in the following section.\\

\subsection{Noisy gates and channels}
Let us compare the performance of NQKD and 2QKD when
using imperfect two-qubit gates in the production of
the entangled resource states.
We employ, for both 2QKD and NQKD, the model of depolarising noise, i.e. if a two-qubit gate fails,
which happens with probability $f_G$, then the two processed qubits
are traced out and replaced by the completely mixed state.\\
 When the GHZ resource state is produced
in the network of Fig.~\ref{fig:multipartite},
Alice starts with the state $\ket{+}_A\ket{0}^{\otimes N-1}$ and applies a controlled-NOT gate from $A$ to each of the other qubits.

The secret key rate is shown in Fig.~\ref{fig:keyratesfG} as a function of the gate error rate $f_G$. It captures the expectation that the demands on the gates for producing an $N$-party
GHZ state increase with the number of parties $N$.
We mention that the GHZ state could also be produced using a single multi-qubit gate, e.g. $C_{X^{\otimes (N-1)}}=\proj{0}\otimes\1+\proj{1}\otimes X^{\otimes (N-1)}$, which is locally equivalent to the controlled-Phase gate, see e.g.~\cite{Liu2016}. The QBER caused by this gate is $Q=\frac{f_G}{2}$. Because the threshold $Q$ increases with $N$ (cf. Fig.~\ref{fig:keyrates}), so does the threshold gate failure probability in this case.

In addition to imperfect gates, noise might be introduced by the transmission channel. Consider, for example, the situation when the qubit of each Bob is individually affected by a depolarising channel. Let the probability of depolarisation be $f_C$, then the QBER is
\begin{equation}
Q(f_C)=\frac{2^N-2 }{2^N}\left(1-(1-f_C )^N\right) \label{eq:fC}
\end{equation}
and the key rate can be calculated according to Eq.~(\ref{eq:rate}).\\

\section{Quantum key distribution in networks}\label{sec:networks}
We will now show that in quantum networks with constrained channel
capacity and with quantum routers, employing
multipartite entanglement leads to
a higher secret key rate than bipartite entanglement, when
the gate quality is higher than a threshold value.\\
Beyond the simple network of Fig.~\ref{fig:schemes}, the GHZ resource state can be distributed in many different networks. Consider a fixed but general network as given via a graph with vertices and directed edges. Let all channels have the same
transmission capacity (also called bandwidth), which is associated with the direction of the corresponding edge. For the sake of a simple presentation, we assume that this transmission capacity is one qubit per second. Thus, the time $t_{\mathrm{rep}}$ consumed in one round (steps 1 and 2 of the protocol) is proportional to the number of network uses in that round. 
A generic network has some bottlenecks. In this case the difference between the NQKD and 2QKD  protocol becomes evident: Alice may send a single qubit in the NQKD scheme, while she has to transmit $N-1$ qubits in the 2QKD case.\\
As an example consider the quantum network where all parties are connected to a single central router $C$, see Fig.~\ref{fig:router}. %
\begin{figure}[tbp]%
	\includegraphics[scale=1.0]{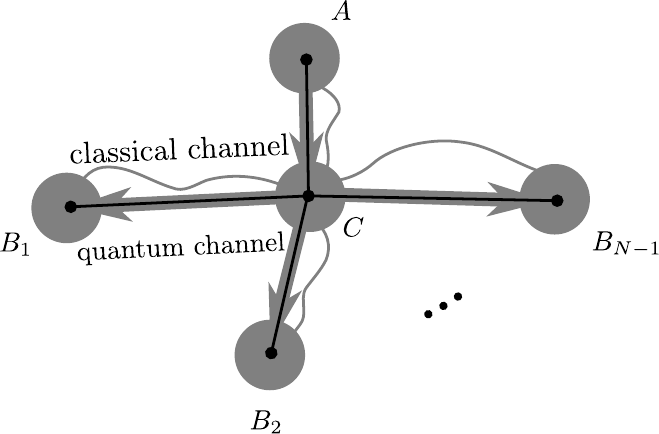}%
	\caption{This quantum network with a central router $C$, which is able to
produce and entangle qubits, exemplifies a network with a bottleneck. The GHZ-like resource state used in the multipartite entanglement QKD protocol, see Eq.~(\ref{eq:GHZ}), can be distributed in a single use of the depicted network (i.e. each channel transmits a single qubit only)~\cite{EKB16b}, while $N-1$ uses of the network are necessary in the 2QKD protocol.}\label{fig:router}%
\end{figure}%
Because $C$ is not trusted we assume it to be in the control of Eve. In this network
the channel from $A$ to $C$ constitutes a bottleneck. Note, however, that this network can be much more economical than the one of Fig.~\ref{fig:schemes} if the distance between $A$ and $C$ is large. The 2QKD protocol needs $N-1$ network uses, i.e. $t_{\mathrm{rep}}^{\mathrm{(2QKD)}}=(N-1) \,\mathrm{s}$,  to distribute the Bell pairs. In contrast to this the NQKD protocol can employ the quantum network coding~\cite{Ahlswede00,Leung06,Hayashi07,Kobayashi09,Kobayashi10,Beaudrap14,Satoh16} scheme of reference~\cite{EKB16b} to distribute the GHZ state in a single network use, i.e. $t_{\mathrm{rep}}^{\mathrm{(NQKD)}}=1\,\mathrm{s}$. See Appendix~\ref{app:QNC} for the explicit calculation. Thus the key rate of the NQKD protocol is $(N-1)$ times larger than the one of the 2QKD protocol in the ideal case ($r_\infty=1$).\\
When again using noisy two-qubit gates (the QBER calculation is analogous to the case of the network shown in Fig~\ref{fig:multipartite} discussed above), the QBER for the NQKD protocol increases with $N$. These two effects lead to gate error thresholds below which the NQKD protocol outperforms 2QKD, see Fig.~\ref{fig:maxfGNQKDadvantage}.
\begin{figure}[tbp]%
	\centering%
	\subfigure[Preparation noise, see Appendix~\ref{app:gates} for details.\label{fig:maxfGNQKDadvantage}]{\includegraphics[width=0.49\linewidth]{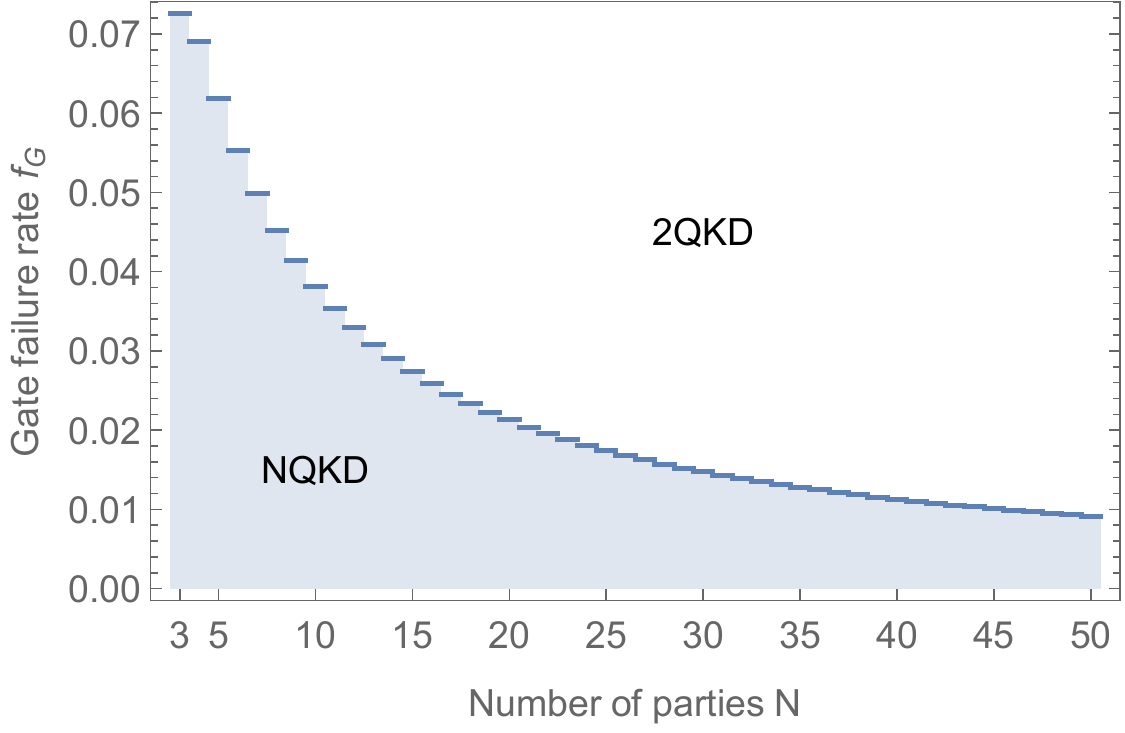}}\hfill%
		\subfigure[Transmission noise, see Eq.~(\ref{eq:fC}).\label{fig:maxfCNQKDadvantage}]{\includegraphics[width=0.49\linewidth]{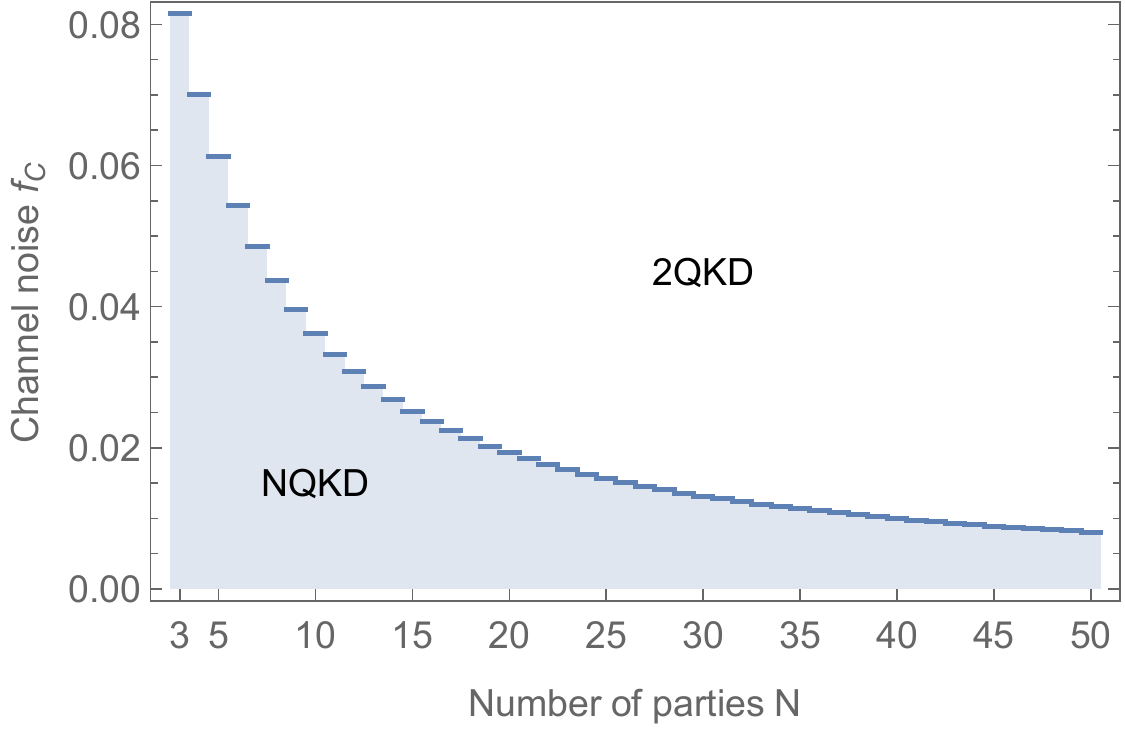}}%
	\caption{For less noise than the shown threshold, i.e. in the blue area, NQKD leads to higher key rates than 2QKD in the network of Fig.~\ref{fig:router}.} \label{fig:NQKDadvantage}%
\end{figure}%
For a fixed number of parties $N$ there is a maximal gate error probability below which the NQKD protocol outperforms the bipartite approach in the quantum network of Fig.~\ref{fig:router}. For $N=3$ already gate failure rates below $7.2\,\%$ imply that NQKD outperforms 2QKD. More values are listed in the Appendix~\ref{app:gates}.\\
The exact same behavior can be observed when considering noisy channels. In the ideal case NQKD outperforms 2QKD, while NQKD is more prone to channel noise. The resulting threshold noise levels are shown in Fig.~\ref{fig:maxfCNQKDadvantage}.\\
We mention that the famous butterfly network~\cite{Ahlswede00} leads to a similar advantage, see Appendix~\ref{app:butterfly} for details.
\section{Conclusion}
In this paper we analysed
a quantum conference key distribution (QKD)  protocol for $N$ parties
which is based on multipartite entangled resource states.
We generalised the information theoretic security analysis of \cite{Renner05} to this $N$-partite scenario.
Using the depolarisation method we derived an
 analytical formula for the secret key rate as a function of the quantum bit error rate (QBER). For a fixed QBER the secret key rate is found to
increase with the number of parties. Accordingly, the threshold QBER until which a non-zero secret key can be obtained increases with the number of parties.
\\
Furthermore, we presented an example where multipartite entanglement-based
QKD outperforms the approach based on bipartite QKD links.
We found this advantage in networks with bottlenecks and showed that it
holds above a certain threshold gate quality which depends on the number
of parties.\\
We expect more interesting insights from analysing further aspects
 of the multipartite entanglement-based
QKD protocol. Regarding implementations the secret key calculation of the protocol for finite numbers of rounds will be beneficial. Various examples
of network layouts and the link to network coding schemes will deserve more detailed investigations.

\bibliographystyle{apsrev4-1}
\bibliography{references}

\begin{thebibliography}{51}%
\makeatletter
\providecommand \@ifxundefined [1]{%
 \@ifx{#1\undefined}
}%
\providecommand \@ifnum [1]{%
 \ifnum #1\expandafter \@firstoftwo
 \else \expandafter \@secondoftwo
 \fi
}%
\providecommand \@ifx [1]{%
 \ifx #1\expandafter \@firstoftwo
 \else \expandafter \@secondoftwo
 \fi
}%
\providecommand \natexlab [1]{#1}%
\providecommand \enquote  [1]{``#1''}%
\providecommand \bibnamefont  [1]{#1}%
\providecommand \bibfnamefont [1]{#1}%
\providecommand \citenamefont [1]{#1}%
\providecommand \href@noop [0]{\@secondoftwo}%
\providecommand \href [0]{\begingroup \@sanitize@url \@href}%
\providecommand \@href[1]{\@@startlink{#1}\@@href}%
\providecommand \@@href[1]{\endgroup#1\@@endlink}%
\providecommand \@sanitize@url [0]{\catcode `\\12\catcode `\$12\catcode
  `\&12\catcode `\#12\catcode `\^12\catcode `\_12\catcode `\%12\relax}%
\providecommand \@@startlink[1]{}%
\providecommand \@@endlink[0]{}%
\providecommand \url  [0]{\begingroup\@sanitize@url \@url }%
\providecommand \@url [1]{\endgroup\@href {#1}{\urlprefix }}%
\providecommand \urlprefix  [0]{URL }%
\providecommand \Eprint [0]{\href }%
\providecommand \doibase [0]{http://dx.doi.org/}%
\providecommand \selectlanguage [0]{\@gobble}%
\providecommand \bibinfo  [0]{\@secondoftwo}%
\providecommand \bibfield  [0]{\@secondoftwo}%
\providecommand \translation [1]{[#1]}%
\providecommand \BibitemOpen [0]{}%
\providecommand \bibitemStop [0]{}%
\providecommand \bibitemNoStop [0]{.\EOS\space}%
\providecommand \EOS [0]{\spacefactor3000\relax}%
\providecommand \BibitemShut  [1]{\csname bibitem#1\endcsname}%
\let\auto@bib@innerbib\@empty
\bibitem [{\citenamefont {Gisin}\ \emph {et~al.}(2002)\citenamefont {Gisin},
  \citenamefont {Ribordy}, \citenamefont {Tittel},\ and\ \citenamefont
  {Zbinden}}]{G+02}%
  \BibitemOpen
  \bibfield  {author} {\bibinfo {author} {\bibfnamefont {N.}~\bibnamefont
  {Gisin}}, \bibinfo {author} {\bibfnamefont {G.}~\bibnamefont {Ribordy}},
  \bibinfo {author} {\bibfnamefont {W.}~\bibnamefont {Tittel}}, \ and\ \bibinfo
  {author} {\bibfnamefont {H.}~\bibnamefont {Zbinden}},\ }\href@noop {}
  {\bibfield  {journal} {\bibinfo  {journal} {Rev. Mod. Phys.}\ }\textbf
  {\bibinfo {volume} {74}},\ \bibinfo {pages} {145} (\bibinfo {year}
  {2002})}\BibitemShut {NoStop}%
\bibitem [{\citenamefont {Du\v{s}ek}\ \emph {et~al.}(2006)\citenamefont
  {Du\v{s}ek}, \citenamefont {L\"{u}tkenhaus},\ and\ \citenamefont
  {Hendrych}}]{DLH06}%
  \BibitemOpen
  \bibfield  {author} {\bibinfo {author} {\bibfnamefont {M.}~\bibnamefont
  {Du\v{s}ek}}, \bibinfo {author} {\bibfnamefont {N.}~\bibnamefont
  {L\"{u}tkenhaus}}, \ and\ \bibinfo {author} {\bibfnamefont {M.}~\bibnamefont
  {Hendrych}}\ }(\bibinfo  {publisher} {Elsevier},\ \bibinfo {year} {2006})\
  pp.\ \bibinfo {pages} {381 -- 454}\BibitemShut {NoStop}%
\bibitem [{\citenamefont {Bru\ss}\ \emph {et~al.}(2007)\citenamefont {Bru\ss},
  \citenamefont {Erd\'elyi}, \citenamefont {Meyer}, \citenamefont {Riege},\
  and\ \citenamefont {Rothe}}]{B+07}%
  \BibitemOpen
  \bibfield  {author} {\bibinfo {author} {\bibfnamefont {D.}~\bibnamefont
  {Bru\ss}}, \bibinfo {author} {\bibfnamefont {G.}~\bibnamefont {Erd\'elyi}},
  \bibinfo {author} {\bibfnamefont {T.}~\bibnamefont {Meyer}}, \bibinfo
  {author} {\bibfnamefont {T.}~\bibnamefont {Riege}}, \ and\ \bibinfo {author}
  {\bibfnamefont {J.}~\bibnamefont {Rothe}},\ }\href@noop {} {\bibfield
  {journal} {\bibinfo  {journal} {ACM Computing Surveys (CSUR)}\ }\textbf
  {\bibinfo {volume} {39}},\ \bibinfo {pages} {6} (\bibinfo {year}
  {2007})}\BibitemShut {NoStop}%
\bibitem [{\citenamefont {Wootters}\ and\ \citenamefont {Zurek}(1982)}]{WZ82}%
  \BibitemOpen
  \bibfield  {author} {\bibinfo {author} {\bibfnamefont {W.~K.}\ \bibnamefont
  {Wootters}}\ and\ \bibinfo {author} {\bibfnamefont {W.~H.}\ \bibnamefont
  {Zurek}},\ }\href {\doibase 10.1038/299802a0} {\bibfield  {journal} {\bibinfo
   {journal} {Nature}\ }\textbf {\bibinfo {volume} {299}},\ \bibinfo {pages}
  {802} (\bibinfo {year} {1982})}\BibitemShut {NoStop}%
\bibitem [{\citenamefont {Bennett}\ and\ \citenamefont
  {Brassard}(1984)}]{BB84}%
  \BibitemOpen
  \bibfield  {author} {\bibinfo {author} {\bibfnamefont {C.}~\bibnamefont
  {Bennett}}\ and\ \bibinfo {author} {\bibfnamefont {G.}~\bibnamefont
  {Brassard}},\ }\href@noop {} {\bibfield  {journal} {\bibinfo  {journal}
  {Proceedings of IEEE International Conference on Computers, Systems and
  Signal Processing}\ ,\ \bibinfo {pages} {175}} (\bibinfo {year}
  {1984})}\BibitemShut {NoStop}%
\bibitem [{\citenamefont {Coffman}\ \emph {et~al.}(2000)\citenamefont
  {Coffman}, \citenamefont {Kundu},\ and\ \citenamefont {Wootters}}]{CKW00}%
  \BibitemOpen
  \bibfield  {author} {\bibinfo {author} {\bibfnamefont {V.}~\bibnamefont
  {Coffman}}, \bibinfo {author} {\bibfnamefont {J.}~\bibnamefont {Kundu}}, \
  and\ \bibinfo {author} {\bibfnamefont {W.~K.}\ \bibnamefont {Wootters}},\
  }\href {\doibase 10.1103/PhysRevA.61.052306} {\bibfield  {journal} {\bibinfo
  {journal} {Phys. Rev. A}\ }\textbf {\bibinfo {volume} {61}},\ \bibinfo
  {pages} {052306} (\bibinfo {year} {2000})}\BibitemShut {NoStop}%
\bibitem [{\citenamefont {Ekert}(1991)}]{Ekert91}%
  \BibitemOpen
  \bibfield  {author} {\bibinfo {author} {\bibfnamefont {A.~K.}\ \bibnamefont
  {Ekert}},\ }\href {\doibase 10.1103/PhysRevLett.67.661} {\bibfield  {journal}
  {\bibinfo  {journal} {Phys. Rev. Lett.}\ }\textbf {\bibinfo {volume} {67}},\
  \bibinfo {pages} {661} (\bibinfo {year} {1991})}\BibitemShut {NoStop}%
\bibitem [{\citenamefont {Bru\ss{}}(1998)}]{Bruss98}%
  \BibitemOpen
  \bibfield  {author} {\bibinfo {author} {\bibfnamefont {D.}~\bibnamefont
  {Bru\ss{}}},\ }\href {\doibase 10.1103/PhysRevLett.81.3018} {\bibfield
  {journal} {\bibinfo  {journal} {Phys. Rev. Lett.}\ }\textbf {\bibinfo
  {volume} {81}},\ \bibinfo {pages} {3018} (\bibinfo {year}
  {1998})}\BibitemShut {NoStop}%
\bibitem [{\citenamefont {Grosshans}\ and\ \citenamefont
  {Grangier}(2002)}]{GG02}%
  \BibitemOpen
  \bibfield  {author} {\bibinfo {author} {\bibfnamefont {F.}~\bibnamefont
  {Grosshans}}\ and\ \bibinfo {author} {\bibfnamefont {P.}~\bibnamefont
  {Grangier}},\ }\href {\doibase 10.1103/PhysRevLett.88.057902} {\bibfield
  {journal} {\bibinfo  {journal} {Phys. Rev. Lett.}\ }\textbf {\bibinfo
  {volume} {88}},\ \bibinfo {pages} {057902} (\bibinfo {year}
  {2002})}\BibitemShut {NoStop}%
\bibitem [{\citenamefont {Zhang}\ \emph {et~al.}(2008)\citenamefont {Zhang},
  \citenamefont {Silberhorn},\ and\ \citenamefont {Walmsley}}]{ZSW08}%
  \BibitemOpen
  \bibfield  {author} {\bibinfo {author} {\bibfnamefont {L.}~\bibnamefont
  {Zhang}}, \bibinfo {author} {\bibfnamefont {C.}~\bibnamefont {Silberhorn}}, \
  and\ \bibinfo {author} {\bibfnamefont {I.~A.}\ \bibnamefont {Walmsley}},\
  }\href {\doibase 10.1103/PhysRevLett.100.110504} {\bibfield  {journal}
  {\bibinfo  {journal} {Phys. Rev. Lett.}\ }\textbf {\bibinfo {volume} {100}},\
  \bibinfo {pages} {110504} (\bibinfo {year} {2008})}\BibitemShut {NoStop}%
\bibitem [{\citenamefont {Lo}\ \emph {et~al.}(2012)\citenamefont {Lo},
  \citenamefont {Curty},\ and\ \citenamefont {Qi}}]{LCQ12}%
  \BibitemOpen
  \bibfield  {author} {\bibinfo {author} {\bibfnamefont {H.-K.}\ \bibnamefont
  {Lo}}, \bibinfo {author} {\bibfnamefont {M.}~\bibnamefont {Curty}}, \ and\
  \bibinfo {author} {\bibfnamefont {B.}~\bibnamefont {Qi}},\ }\href {\doibase
  10.1103/PhysRevLett.108.130503} {\bibfield  {journal} {\bibinfo  {journal}
  {Phys. Rev. Lett.}\ }\textbf {\bibinfo {volume} {108}},\ \bibinfo {pages}
  {130503} (\bibinfo {year} {2012})}\BibitemShut {NoStop}%
\bibitem [{\citenamefont {Vazirani}\ and\ \citenamefont {Vidick}(2014)}]{VV14}%
  \BibitemOpen
  \bibfield  {author} {\bibinfo {author} {\bibfnamefont {U.}~\bibnamefont
  {Vazirani}}\ and\ \bibinfo {author} {\bibfnamefont {T.}~\bibnamefont
  {Vidick}},\ }\href {\doibase 10.1103/PhysRevLett.113.140501} {\bibfield
  {journal} {\bibinfo  {journal} {Phys. Rev. Lett.}\ }\textbf {\bibinfo
  {volume} {113}},\ \bibinfo {pages} {140501} (\bibinfo {year}
  {2014})}\BibitemShut {NoStop}%
\bibitem [{\citenamefont {Lo}\ and\ \citenamefont {Chau}(1999)}]{LC99}%
  \BibitemOpen
  \bibfield  {author} {\bibinfo {author} {\bibfnamefont {H.-K.}\ \bibnamefont
  {Lo}}\ and\ \bibinfo {author} {\bibfnamefont {H.~F.}\ \bibnamefont {Chau}},\
  }\href {\doibase 10.1126/science.283.5410.2050} {\bibfield  {journal}
  {\bibinfo  {journal} {Science}\ }\textbf {\bibinfo {volume} {283}},\ \bibinfo
  {pages} {2050} (\bibinfo {year} {1999})},\ \Eprint
  {http://arxiv.org/abs/http://science.sciencemag.org/content/283/5410/2050.full.pdf}
  {http://science.sciencemag.org/content/283/5410/2050.full.pdf} \BibitemShut
  {NoStop}%
\bibitem [{\citenamefont {Shor}\ and\ \citenamefont {Preskill}(2000)}]{SP00}%
  \BibitemOpen
  \bibfield  {author} {\bibinfo {author} {\bibfnamefont {P.~W.}\ \bibnamefont
  {Shor}}\ and\ \bibinfo {author} {\bibfnamefont {J.}~\bibnamefont
  {Preskill}},\ }\href {\doibase 10.1103/PhysRevLett.85.441} {\bibfield
  {journal} {\bibinfo  {journal} {Phys. Rev. Lett.}\ }\textbf {\bibinfo
  {volume} {85}},\ \bibinfo {pages} {441} (\bibinfo {year} {2000})}\BibitemShut
  {NoStop}%
\bibitem [{\citenamefont {Mayers}(2001)}]{May01}%
  \BibitemOpen
  \bibfield  {author} {\bibinfo {author} {\bibfnamefont {D.}~\bibnamefont
  {Mayers}},\ }\href {\doibase 10.1145/382780.382781} {\bibfield  {journal}
  {\bibinfo  {journal} {J. ACM}\ }\textbf {\bibinfo {volume} {48}},\ \bibinfo
  {pages} {351} (\bibinfo {year} {2001})}\BibitemShut {NoStop}%
\bibitem [{\citenamefont {{Renner}}(2005)}]{Renner05}%
  \BibitemOpen
  \bibfield  {author} {\bibinfo {author} {\bibfnamefont {R.}~\bibnamefont
  {{Renner}}},\ }\emph {\bibinfo {title} {{Security of Quantum Key
  Distribution}}},\ \href@noop {} {Ph.D. thesis},\ \bibinfo  {school} {PhD
  Thesis, 2005} (\bibinfo {year} {2005})\BibitemShut {NoStop}%
\bibitem [{\citenamefont {Scarani}\ \emph {et~al.}(2009)\citenamefont
  {Scarani}, \citenamefont {Bechmann-Pasquinucci}, \citenamefont {Cerf},
  \citenamefont {Du\ifmmode~\check{s}\else \v{s}\fi{}ek}, \citenamefont
  {L\"utkenhaus},\ and\ \citenamefont {Peev}}]{Sca+09}%
  \BibitemOpen
  \bibfield  {author} {\bibinfo {author} {\bibfnamefont {V.}~\bibnamefont
  {Scarani}}, \bibinfo {author} {\bibfnamefont {H.}~\bibnamefont
  {Bechmann-Pasquinucci}}, \bibinfo {author} {\bibfnamefont {N.~J.}\
  \bibnamefont {Cerf}}, \bibinfo {author} {\bibfnamefont {M.}~\bibnamefont
  {Du\ifmmode~\check{s}\else \v{s}\fi{}ek}}, \bibinfo {author} {\bibfnamefont
  {N.}~\bibnamefont {L\"utkenhaus}}, \ and\ \bibinfo {author} {\bibfnamefont
  {M.}~\bibnamefont {Peev}},\ }\href {\doibase 10.1103/RevModPhys.81.1301}
  {\bibfield  {journal} {\bibinfo  {journal} {Rev. Mod. Phys.}\ }\textbf
  {\bibinfo {volume} {81}},\ \bibinfo {pages} {1301} (\bibinfo {year}
  {2009})}\BibitemShut {NoStop}%
\bibitem [{\citenamefont {{Tomamichel}}\ and\ \citenamefont
  {{Leverrier}}(2015)}]{TL15}%
  \BibitemOpen
  \bibfield  {author} {\bibinfo {author} {\bibfnamefont {M.}~\bibnamefont
  {{Tomamichel}}}\ and\ \bibinfo {author} {\bibfnamefont {A.}~\bibnamefont
  {{Leverrier}}},\ }\href@noop {} {\bibfield  {journal} {\bibinfo  {journal}
  {ArXiv e-prints}\ } (\bibinfo {year} {2015})},\ \Eprint
  {http://arxiv.org/abs/1506.08458} {arXiv:1506.08458 [quant-ph]} \BibitemShut
  {NoStop}%
\bibitem [{\citenamefont {Coles}\ \emph {et~al.}(2016)\citenamefont {Coles},
  \citenamefont {Metodiev},\ and\ \citenamefont {Lutkenhaus}}]{CML16}%
  \BibitemOpen
  \bibfield  {author} {\bibinfo {author} {\bibfnamefont {P.~J.}\ \bibnamefont
  {Coles}}, \bibinfo {author} {\bibfnamefont {E.~M.}\ \bibnamefont {Metodiev}},
  \ and\ \bibinfo {author} {\bibfnamefont {N.}~\bibnamefont {Lutkenhaus}},\
  }\href {http://dx.doi.org/10.1038/ncomms11712} {\bibfield  {journal}
  {\bibinfo  {journal} {Nat Commun}\ }\textbf {\bibinfo {volume} {7}} (\bibinfo
  {year} {2016})},\ \bibinfo {note} {supplementary information available for
  this article at
  http://www.nature.com/ncomms/2016/160520/ncomms11712/suppinfo/ncomms11712\_S1.html}\BibitemShut
  {NoStop}%
\bibitem [{\citenamefont {Acin}\ \emph {et~al.}(2007)\citenamefont {Acin},
  \citenamefont {Cirac},\ and\ \citenamefont {Lewenstein}}]{ACL07}%
  \BibitemOpen
  \bibfield  {author} {\bibinfo {author} {\bibfnamefont {A.}~\bibnamefont
  {Acin}}, \bibinfo {author} {\bibfnamefont {J.~I.}\ \bibnamefont {Cirac}}, \
  and\ \bibinfo {author} {\bibfnamefont {M.}~\bibnamefont {Lewenstein}},\
  }\href {\doibase 10.1038/nphys549} {\bibfield  {journal} {\bibinfo  {journal}
  {Nat Phys}\ }\textbf {\bibinfo {volume} {3}},\ \bibinfo {pages} {256}
  (\bibinfo {year} {2007})}\BibitemShut {NoStop}%
\bibitem [{\citenamefont {Peev}\ \emph {et~al.}(2009)\citenamefont {Peev},
  \citenamefont {Pacher}, \citenamefont {All\'eaume}, \citenamefont {Barreiro},
  \citenamefont {J}, \citenamefont {Boxleitner}, \citenamefont {Debuisschert},
  \citenamefont {Diamanti}, \citenamefont {Dianati}, \citenamefont {Dynes},
  \citenamefont {Fasel}, \citenamefont {Fossier}, \citenamefont {F\"urst},
  \citenamefont {Gautier}, \citenamefont {Gay}, \citenamefont {Gisin},
  \citenamefont {Grangier}, \citenamefont {Happe}, \citenamefont {Hasani},
  \citenamefont {Hentschel}, \citenamefont {H\"ubel}, \citenamefont {Humer},
  \citenamefont {L\"anger}, \citenamefont {Legr\'e}, \citenamefont {Lieger},
  \citenamefont {Lodewyck}, \citenamefont {Lor\"unser}, \citenamefont
  {L\"utkenhaus}, \citenamefont {Marhold}, \citenamefont {Matyus},
  \citenamefont {Maurhart}, \citenamefont {Monat}, \citenamefont {Nauerth},
  \citenamefont {Page}, \citenamefont {Poppe}, \citenamefont {Querasser},
  \citenamefont {Ribordy}, \citenamefont {Robyr}, \citenamefont {Salvail},
  \citenamefont {Sharpe}, \citenamefont {Shields}, \citenamefont {Stucki},
  \citenamefont {Suda}, \citenamefont {Tamas}, \citenamefont {Themel},
  \citenamefont {Thew}, \citenamefont {Thoma}, \citenamefont {Treiber},
  \citenamefont {Trinkler}, \citenamefont {Tualle-Brouri}, \citenamefont
  {Vannel}, \citenamefont {Walenta}, \citenamefont {Weier}, \citenamefont
  {Weinfurter}, \citenamefont {Wimberger}, \citenamefont {Yuan}, \citenamefont
  {Zbinden},\ and\ \citenamefont {Zeilinger}}]{SECOQC09}%
  \BibitemOpen
  \bibfield  {author} {\bibinfo {author} {\bibfnamefont {M.}~\bibnamefont
  {Peev}}, \bibinfo {author} {\bibfnamefont {C.}~\bibnamefont {Pacher}},
  \bibinfo {author} {\bibfnamefont {R.}~\bibnamefont {All\'eaume}}, \bibinfo
  {author} {\bibfnamefont {C.}~\bibnamefont {Barreiro}}, \bibinfo {author}
  {\bibfnamefont {B.}~\bibnamefont {J}}, \bibinfo {author} {\bibfnamefont
  {W.}~\bibnamefont {Boxleitner}}, \bibinfo {author} {\bibfnamefont
  {T.}~\bibnamefont {Debuisschert}}, \bibinfo {author} {\bibfnamefont
  {E.}~\bibnamefont {Diamanti}}, \bibinfo {author} {\bibfnamefont
  {M.}~\bibnamefont {Dianati}}, \bibinfo {author} {\bibfnamefont {J.~F.}\
  \bibnamefont {Dynes}}, \bibinfo {author} {\bibfnamefont {S.}~\bibnamefont
  {Fasel}}, \bibinfo {author} {\bibfnamefont {S.}~\bibnamefont {Fossier}},
  \bibinfo {author} {\bibfnamefont {M.}~\bibnamefont {F\"urst}}, \bibinfo
  {author} {\bibfnamefont {J.-D.}\ \bibnamefont {Gautier}}, \bibinfo {author}
  {\bibfnamefont {O.}~\bibnamefont {Gay}}, \bibinfo {author} {\bibfnamefont
  {N.}~\bibnamefont {Gisin}}, \bibinfo {author} {\bibfnamefont
  {P.}~\bibnamefont {Grangier}}, \bibinfo {author} {\bibfnamefont
  {A.}~\bibnamefont {Happe}}, \bibinfo {author} {\bibfnamefont
  {Y.}~\bibnamefont {Hasani}}, \bibinfo {author} {\bibfnamefont
  {M.}~\bibnamefont {Hentschel}}, \bibinfo {author} {\bibfnamefont
  {H.}~\bibnamefont {H\"ubel}}, \bibinfo {author} {\bibfnamefont
  {G.}~\bibnamefont {Humer}}, \bibinfo {author} {\bibfnamefont
  {T.}~\bibnamefont {L\"anger}}, \bibinfo {author} {\bibfnamefont
  {M.}~\bibnamefont {Legr\'e}}, \bibinfo {author} {\bibfnamefont
  {R.}~\bibnamefont {Lieger}}, \bibinfo {author} {\bibfnamefont
  {J.}~\bibnamefont {Lodewyck}}, \bibinfo {author} {\bibfnamefont
  {T.}~\bibnamefont {Lor\"unser}}, \bibinfo {author} {\bibfnamefont
  {N.}~\bibnamefont {L\"utkenhaus}}, \bibinfo {author} {\bibfnamefont
  {A.}~\bibnamefont {Marhold}}, \bibinfo {author} {\bibfnamefont
  {T.}~\bibnamefont {Matyus}}, \bibinfo {author} {\bibfnamefont
  {O.}~\bibnamefont {Maurhart}}, \bibinfo {author} {\bibfnamefont
  {L.}~\bibnamefont {Monat}}, \bibinfo {author} {\bibfnamefont
  {S.}~\bibnamefont {Nauerth}}, \bibinfo {author} {\bibfnamefont {J.-B.}\
  \bibnamefont {Page}}, \bibinfo {author} {\bibfnamefont {A.}~\bibnamefont
  {Poppe}}, \bibinfo {author} {\bibfnamefont {E.}~\bibnamefont {Querasser}},
  \bibinfo {author} {\bibfnamefont {G.}~\bibnamefont {Ribordy}}, \bibinfo
  {author} {\bibfnamefont {S.}~\bibnamefont {Robyr}}, \bibinfo {author}
  {\bibfnamefont {L.}~\bibnamefont {Salvail}}, \bibinfo {author} {\bibfnamefont
  {A.~W.}\ \bibnamefont {Sharpe}}, \bibinfo {author} {\bibfnamefont {A.~J.}\
  \bibnamefont {Shields}}, \bibinfo {author} {\bibfnamefont {D.}~\bibnamefont
  {Stucki}}, \bibinfo {author} {\bibfnamefont {M.}~\bibnamefont {Suda}},
  \bibinfo {author} {\bibfnamefont {C.}~\bibnamefont {Tamas}}, \bibinfo
  {author} {\bibfnamefont {T.}~\bibnamefont {Themel}}, \bibinfo {author}
  {\bibfnamefont {R.~T.}\ \bibnamefont {Thew}}, \bibinfo {author}
  {\bibfnamefont {Y.}~\bibnamefont {Thoma}}, \bibinfo {author} {\bibfnamefont
  {A.}~\bibnamefont {Treiber}}, \bibinfo {author} {\bibfnamefont
  {P.}~\bibnamefont {Trinkler}}, \bibinfo {author} {\bibfnamefont
  {R.}~\bibnamefont {Tualle-Brouri}}, \bibinfo {author} {\bibfnamefont
  {F.}~\bibnamefont {Vannel}}, \bibinfo {author} {\bibfnamefont
  {N.}~\bibnamefont {Walenta}}, \bibinfo {author} {\bibfnamefont
  {H.}~\bibnamefont {Weier}}, \bibinfo {author} {\bibfnamefont
  {H.}~\bibnamefont {Weinfurter}}, \bibinfo {author} {\bibfnamefont
  {I.}~\bibnamefont {Wimberger}}, \bibinfo {author} {\bibfnamefont {Z.~L.}\
  \bibnamefont {Yuan}}, \bibinfo {author} {\bibfnamefont {H.}~\bibnamefont
  {Zbinden}}, \ and\ \bibinfo {author} {\bibfnamefont {A.}~\bibnamefont
  {Zeilinger}},\ }\href {http://stacks.iop.org/1367-2630/11/i=7/a=075001}
  {\bibfield  {journal} {\bibinfo  {journal} {New Journal of Physics}\ }\textbf
  {\bibinfo {volume} {11}},\ \bibinfo {pages} {075001} (\bibinfo {year}
  {2009})}\BibitemShut {NoStop}%
\bibitem [{\citenamefont {Leung}\ \emph {et~al.}(2010)\citenamefont {Leung},
  \citenamefont {Oppenheim},\ and\ \citenamefont {Winter}}]{LOW10}%
  \BibitemOpen
  \bibfield  {author} {\bibinfo {author} {\bibfnamefont {D.~W.}\ \bibnamefont
  {Leung}}, \bibinfo {author} {\bibfnamefont {J.}~\bibnamefont {Oppenheim}}, \
  and\ \bibinfo {author} {\bibfnamefont {A.~J.}\ \bibnamefont {Winter}},\
  }\href {\doibase 10.1109/TIT.2010.2048442} {\bibfield  {journal} {\bibinfo
  {journal} {{IEEE} Trans. Information Theory}\ }\textbf {\bibinfo {volume}
  {56}},\ \bibinfo {pages} {3478} (\bibinfo {year} {2010})}\BibitemShut
  {NoStop}%
\bibitem [{\citenamefont {Sasaki}\ \emph {et~al.}(2011)\citenamefont {Sasaki},
  \citenamefont {Fujiwara}, \citenamefont {Ishizuka}, \citenamefont {Klaus},
  \citenamefont {Wakui}, \citenamefont {Takeoka}, \citenamefont {Miki},
  \citenamefont {Yamashita}, \citenamefont {Wang}, \citenamefont {Tanaka},
  \citenamefont {Yoshino}, \citenamefont {Nambu}, \citenamefont {Takahashi},
  \citenamefont {Tajima}, \citenamefont {Tomita}, \citenamefont {Domeki},
  \citenamefont {Hasegawa}, \citenamefont {Sakai}, \citenamefont {Kobayashi},
  \citenamefont {Asai}, \citenamefont {Shimizu}, \citenamefont {Tokura},
  \citenamefont {Tsurumaru}, \citenamefont {Matsui}, \citenamefont {Honjo},
  \citenamefont {Tamaki}, \citenamefont {Takesue}, \citenamefont {Tokura},
  \citenamefont {Dynes}, \citenamefont {Dixon}, \citenamefont {Sharpe},
  \citenamefont {Yuan}, \citenamefont {Shields}, \citenamefont {Uchikoga},
  \citenamefont {Legr\'{e}}, \citenamefont {Robyr}, \citenamefont {Trinkler},
  \citenamefont {Monat}, \citenamefont {Page}, \citenamefont {Ribordy},
  \citenamefont {Poppe}, \citenamefont {Allacher}, \citenamefont {Maurhart},
  \citenamefont {L\"{a}nger}, \citenamefont {Peev},\ and\ \citenamefont
  {Zeilinger}}]{Sas+11}%
  \BibitemOpen
  \bibfield  {author} {\bibinfo {author} {\bibfnamefont {M.}~\bibnamefont
  {Sasaki}}, \bibinfo {author} {\bibfnamefont {M.}~\bibnamefont {Fujiwara}},
  \bibinfo {author} {\bibfnamefont {H.}~\bibnamefont {Ishizuka}}, \bibinfo
  {author} {\bibfnamefont {W.}~\bibnamefont {Klaus}}, \bibinfo {author}
  {\bibfnamefont {K.}~\bibnamefont {Wakui}}, \bibinfo {author} {\bibfnamefont
  {M.}~\bibnamefont {Takeoka}}, \bibinfo {author} {\bibfnamefont
  {S.}~\bibnamefont {Miki}}, \bibinfo {author} {\bibfnamefont {T.}~\bibnamefont
  {Yamashita}}, \bibinfo {author} {\bibfnamefont {Z.}~\bibnamefont {Wang}},
  \bibinfo {author} {\bibfnamefont {A.}~\bibnamefont {Tanaka}}, \bibinfo
  {author} {\bibfnamefont {K.}~\bibnamefont {Yoshino}}, \bibinfo {author}
  {\bibfnamefont {Y.}~\bibnamefont {Nambu}}, \bibinfo {author} {\bibfnamefont
  {S.}~\bibnamefont {Takahashi}}, \bibinfo {author} {\bibfnamefont
  {A.}~\bibnamefont {Tajima}}, \bibinfo {author} {\bibfnamefont
  {A.}~\bibnamefont {Tomita}}, \bibinfo {author} {\bibfnamefont
  {T.}~\bibnamefont {Domeki}}, \bibinfo {author} {\bibfnamefont
  {T.}~\bibnamefont {Hasegawa}}, \bibinfo {author} {\bibfnamefont
  {Y.}~\bibnamefont {Sakai}}, \bibinfo {author} {\bibfnamefont
  {H.}~\bibnamefont {Kobayashi}}, \bibinfo {author} {\bibfnamefont
  {T.}~\bibnamefont {Asai}}, \bibinfo {author} {\bibfnamefont {K.}~\bibnamefont
  {Shimizu}}, \bibinfo {author} {\bibfnamefont {T.}~\bibnamefont {Tokura}},
  \bibinfo {author} {\bibfnamefont {T.}~\bibnamefont {Tsurumaru}}, \bibinfo
  {author} {\bibfnamefont {M.}~\bibnamefont {Matsui}}, \bibinfo {author}
  {\bibfnamefont {T.}~\bibnamefont {Honjo}}, \bibinfo {author} {\bibfnamefont
  {K.}~\bibnamefont {Tamaki}}, \bibinfo {author} {\bibfnamefont
  {H.}~\bibnamefont {Takesue}}, \bibinfo {author} {\bibfnamefont
  {Y.}~\bibnamefont {Tokura}}, \bibinfo {author} {\bibfnamefont {J.~F.}\
  \bibnamefont {Dynes}}, \bibinfo {author} {\bibfnamefont {A.~R.}\ \bibnamefont
  {Dixon}}, \bibinfo {author} {\bibfnamefont {A.~W.}\ \bibnamefont {Sharpe}},
  \bibinfo {author} {\bibfnamefont {Z.~L.}\ \bibnamefont {Yuan}}, \bibinfo
  {author} {\bibfnamefont {A.~J.}\ \bibnamefont {Shields}}, \bibinfo {author}
  {\bibfnamefont {S.}~\bibnamefont {Uchikoga}}, \bibinfo {author}
  {\bibfnamefont {M.}~\bibnamefont {Legr\'{e}}}, \bibinfo {author}
  {\bibfnamefont {S.}~\bibnamefont {Robyr}}, \bibinfo {author} {\bibfnamefont
  {P.}~\bibnamefont {Trinkler}}, \bibinfo {author} {\bibfnamefont
  {L.}~\bibnamefont {Monat}}, \bibinfo {author} {\bibfnamefont {J.-B.}\
  \bibnamefont {Page}}, \bibinfo {author} {\bibfnamefont {G.}~\bibnamefont
  {Ribordy}}, \bibinfo {author} {\bibfnamefont {A.}~\bibnamefont {Poppe}},
  \bibinfo {author} {\bibfnamefont {A.}~\bibnamefont {Allacher}}, \bibinfo
  {author} {\bibfnamefont {O.}~\bibnamefont {Maurhart}}, \bibinfo {author}
  {\bibfnamefont {T.}~\bibnamefont {L\"{a}nger}}, \bibinfo {author}
  {\bibfnamefont {M.}~\bibnamefont {Peev}}, \ and\ \bibinfo {author}
  {\bibfnamefont {A.}~\bibnamefont {Zeilinger}},\ }\href {\doibase
  10.1364/OE.19.010387} {\bibfield  {journal} {\bibinfo  {journal} {Opt.
  Express}\ }\textbf {\bibinfo {volume} {19}},\ \bibinfo {pages} {10387}
  (\bibinfo {year} {2011})}\BibitemShut {NoStop}%
\bibitem [{\citenamefont {Van~Meter}\ \emph {et~al.}(2013)\citenamefont
  {Van~Meter}, \citenamefont {Satoh}, \citenamefont {Ladd}, \citenamefont
  {Munro},\ and\ \citenamefont {Nemoto}}]{Met+13}%
  \BibitemOpen
  \bibfield  {author} {\bibinfo {author} {\bibfnamefont {R.}~\bibnamefont
  {Van~Meter}}, \bibinfo {author} {\bibfnamefont {T.}~\bibnamefont {Satoh}},
  \bibinfo {author} {\bibfnamefont {T.~D.}\ \bibnamefont {Ladd}}, \bibinfo
  {author} {\bibfnamefont {W.~J.}\ \bibnamefont {Munro}}, \ and\ \bibinfo
  {author} {\bibfnamefont {K.}~\bibnamefont {Nemoto}},\ }\href {\doibase
  10.1007/s13119-013-0026-2} {\bibfield  {journal} {\bibinfo  {journal}
  {Networking Science}\ }\textbf {\bibinfo {volume} {3}},\ \bibinfo {pages}
  {82} (\bibinfo {year} {2013})}\BibitemShut {NoStop}%
\bibitem [{\citenamefont {Satoh}\ \emph
  {et~al.}(2016{\natexlab{a}})\citenamefont {Satoh}, \citenamefont {Ishizaki},
  \citenamefont {Nagayama},\ and\ \citenamefont {Van~Meter}}]{Sat+16}%
  \BibitemOpen
  \bibfield  {author} {\bibinfo {author} {\bibfnamefont {T.}~\bibnamefont
  {Satoh}}, \bibinfo {author} {\bibfnamefont {K.}~\bibnamefont {Ishizaki}},
  \bibinfo {author} {\bibfnamefont {S.}~\bibnamefont {Nagayama}}, \ and\
  \bibinfo {author} {\bibfnamefont {R.}~\bibnamefont {Van~Meter}},\ }\href
  {\doibase 10.1103/PhysRevA.93.032302} {\bibfield  {journal} {\bibinfo
  {journal} {Phys. Rev. A}\ }\textbf {\bibinfo {volume} {93}},\ \bibinfo
  {pages} {032302} (\bibinfo {year} {2016}{\natexlab{a}})}\BibitemShut
  {NoStop}%
\bibitem [{\citenamefont {Epping}\ \emph
  {et~al.}(2016{\natexlab{a}})\citenamefont {Epping}, \citenamefont
  {Kampermann},\ and\ \citenamefont {Bru{\ss}}}]{EKB16}%
  \BibitemOpen
  \bibfield  {author} {\bibinfo {author} {\bibfnamefont {M.}~\bibnamefont
  {Epping}}, \bibinfo {author} {\bibfnamefont {H.}~\bibnamefont {Kampermann}},
  \ and\ \bibinfo {author} {\bibfnamefont {D.}~\bibnamefont {Bru{\ss}}},\
  }\href {http://stacks.iop.org/1367-2630/18/i=5/a=053036} {\bibfield
  {journal} {\bibinfo  {journal} {New Journal of Physics}\ }\textbf {\bibinfo
  {volume} {18}},\ \bibinfo {pages} {053036} (\bibinfo {year}
  {2016}{\natexlab{a}})}\BibitemShut {NoStop}%
\bibitem [{\citenamefont {Epping}\ \emph
  {et~al.}(2016{\natexlab{b}})\citenamefont {Epping}, \citenamefont
  {Kampermann},\ and\ \citenamefont {Bru\ss{}}}]{EKB16b}%
  \BibitemOpen
  \bibfield  {author} {\bibinfo {author} {\bibfnamefont {M.}~\bibnamefont
  {Epping}}, \bibinfo {author} {\bibfnamefont {H.}~\bibnamefont {Kampermann}},
  \ and\ \bibinfo {author} {\bibfnamefont {D.}~\bibnamefont {Bru\ss{}}},\
  }\href {http://stacks.iop.org/1367-2630/18/i=10/a=103052} {\bibfield
  {journal} {\bibinfo  {journal} {New Journal of Physics}\ }\textbf {\bibinfo
  {volume} {18}},\ \bibinfo {pages} {103052} (\bibinfo {year}
  {2016}{\natexlab{b}})}\BibitemShut {NoStop}%
\bibitem [{\citenamefont {{Singh}}\ and\ \citenamefont
  {{Srikanth}}(2003)}]{SS03}%
  \BibitemOpen
  \bibfield  {author} {\bibinfo {author} {\bibfnamefont {S.~K.}\ \bibnamefont
  {{Singh}}}\ and\ \bibinfo {author} {\bibfnamefont {R.}~\bibnamefont
  {{Srikanth}}},\ }\href@noop {} {\bibfield  {journal} {\bibinfo  {journal}
  {eprint arXiv:quant-ph/0306118}\ } (\bibinfo {year} {2003})},\ \Eprint
  {http://arxiv.org/abs/quant-ph/0306118} {quant-ph/0306118} \BibitemShut
  {NoStop}%
\bibitem [{\citenamefont {{Cabello}}(2000)}]{Cabello00}%
  \BibitemOpen
  \bibfield  {author} {\bibinfo {author} {\bibfnamefont {A.}~\bibnamefont
  {{Cabello}}},\ }\href@noop {} {\bibfield  {journal} {\bibinfo  {journal}
  {eprint arXiv:quant-ph/0009025}\ } (\bibinfo {year} {2000})},\ \Eprint
  {http://arxiv.org/abs/quant-ph/0009025} {quant-ph/0009025} \BibitemShut
  {NoStop}%
\bibitem [{\citenamefont {Scarani}\ and\ \citenamefont {Gisin}(2001)}]{SG01}%
  \BibitemOpen
  \bibfield  {author} {\bibinfo {author} {\bibfnamefont {V.}~\bibnamefont
  {Scarani}}\ and\ \bibinfo {author} {\bibfnamefont {N.}~\bibnamefont
  {Gisin}},\ }\href {\doibase 10.1103/PhysRevA.65.012311} {\bibfield  {journal}
  {\bibinfo  {journal} {Phys. Rev. A}\ }\textbf {\bibinfo {volume} {65}},\
  \bibinfo {pages} {012311} (\bibinfo {year} {2001})}\BibitemShut {NoStop}%
\bibitem [{\citenamefont {Chen}\ and\ \citenamefont {Lo}(2005)}]{CL05}%
  \BibitemOpen
  \bibfield  {author} {\bibinfo {author} {\bibfnamefont {K.}~\bibnamefont
  {Chen}}\ and\ \bibinfo {author} {\bibfnamefont {H.-K.}\ \bibnamefont {Lo}},\
  }in\ \href {\doibase 10.1109/ISIT.2005.1523616} {\emph {\bibinfo {booktitle}
  {Proceedings. International Symposium on Information Theory, 2005. ISIT
  2005.}}}\ (\bibinfo {year} {2005})\ pp.\ \bibinfo {pages}
  {1607--1611}\BibitemShut {NoStop}%
\bibitem [{\citenamefont {Fu}\ \emph {et~al.}(2015)\citenamefont {Fu},
  \citenamefont {Yin}, \citenamefont {Chen},\ and\ \citenamefont
  {Chen}}]{Fu15}%
  \BibitemOpen
  \bibfield  {author} {\bibinfo {author} {\bibfnamefont {Y.}~\bibnamefont
  {Fu}}, \bibinfo {author} {\bibfnamefont {H.-L.}\ \bibnamefont {Yin}},
  \bibinfo {author} {\bibfnamefont {T.-Y.}\ \bibnamefont {Chen}}, \ and\
  \bibinfo {author} {\bibfnamefont {Z.-B.}\ \bibnamefont {Chen}},\ }\href
  {\doibase 10.1103/PhysRevLett.114.090501} {\bibfield  {journal} {\bibinfo
  {journal} {Phys. Rev. Lett.}\ }\textbf {\bibinfo {volume} {114}},\ \bibinfo
  {pages} {090501} (\bibinfo {year} {2015})}\BibitemShut {NoStop}%
\bibitem [{\citenamefont {Renner}\ \emph {et~al.}(2005)\citenamefont {Renner},
  \citenamefont {Gisin},\ and\ \citenamefont {Kraus}}]{RGK05}%
  \BibitemOpen
  \bibfield  {author} {\bibinfo {author} {\bibfnamefont {R.}~\bibnamefont
  {Renner}}, \bibinfo {author} {\bibfnamefont {N.}~\bibnamefont {Gisin}}, \
  and\ \bibinfo {author} {\bibfnamefont {B.}~\bibnamefont {Kraus}},\ }\href
  {\doibase 10.1103/PhysRevA.72.012332} {\bibfield  {journal} {\bibinfo
  {journal} {Phys. Rev. A}\ }\textbf {\bibinfo {volume} {72}},\ \bibinfo
  {pages} {012332} (\bibinfo {year} {2005})}\BibitemShut {NoStop}%
\bibitem [{\citenamefont {{Greenberger}}\ \emph {et~al.}(2007)\citenamefont
  {{Greenberger}}, \citenamefont {{Horne}},\ and\ \citenamefont
  {{Zeilinger}}}]{GHZ07}%
  \BibitemOpen
  \bibfield  {author} {\bibinfo {author} {\bibfnamefont {D.~M.}\ \bibnamefont
  {{Greenberger}}}, \bibinfo {author} {\bibfnamefont {M.~A.}\ \bibnamefont
  {{Horne}}}, \ and\ \bibinfo {author} {\bibfnamefont {A.}~\bibnamefont
  {{Zeilinger}}},\ }\href@noop {} {\bibfield  {journal} {\bibinfo  {journal}
  {ArXiv e-prints}\ } (\bibinfo {year} {2007})},\ \Eprint
  {http://arxiv.org/abs/0712.0921} {arXiv:0712.0921 [quant-ph]} \BibitemShut
  {NoStop}%
\bibitem [{\citenamefont {Zhu}\ \emph {et~al.}(2015)\citenamefont {Zhu},
  \citenamefont {Xu},\ and\ \citenamefont {Pei}}]{ZXP15}%
  \BibitemOpen
  \bibfield  {author} {\bibinfo {author} {\bibfnamefont {C.}~\bibnamefont
  {Zhu}}, \bibinfo {author} {\bibfnamefont {F.}~\bibnamefont {Xu}}, \ and\
  \bibinfo {author} {\bibfnamefont {C.}~\bibnamefont {Pei}},\ }\href@noop {}
  {\bibfield  {journal} {\bibinfo  {journal} {Sci. Rep.}\ }\textbf {\bibinfo
  {volume} {5}},\ \bibinfo {pages} {17449} (\bibinfo {year}
  {2015})}\BibitemShut {NoStop}%
\bibitem [{\citenamefont {Lo}\ \emph {et~al.}(2005)\citenamefont {Lo},
  \citenamefont {Chau},\ and\ \citenamefont {Ardehali}}]{Lo2005}%
  \BibitemOpen
  \bibfield  {author} {\bibinfo {author} {\bibfnamefont {H.-K.}\ \bibnamefont
  {Lo}}, \bibinfo {author} {\bibfnamefont {H.}~\bibnamefont {Chau}}, \ and\
  \bibinfo {author} {\bibfnamefont {M.}~\bibnamefont {Ardehali}},\ }\href
  {\doibase 10.1007/s00145-004-0142-y} {\bibfield  {journal} {\bibinfo
  {journal} {Journal of Cryptology}\ }\textbf {\bibinfo {volume} {18}},\
  \bibinfo {pages} {133} (\bibinfo {year} {2005})}\BibitemShut {NoStop}%
\bibitem [{\citenamefont {Matsumoto}(2007)}]{Ryu07}%
  \BibitemOpen
  \bibfield  {author} {\bibinfo {author} {\bibfnamefont {R.}~\bibnamefont
  {Matsumoto}},\ }\href {\doibase 10.1103/PhysRevA.76.062316} {\bibfield
  {journal} {\bibinfo  {journal} {Phys. Rev. A}\ }\textbf {\bibinfo {volume}
  {76}},\ \bibinfo {pages} {062316} (\bibinfo {year} {2007})}\BibitemShut
  {NoStop}%
\bibitem [{\citenamefont {Cirac}\ and\ \citenamefont
  {Gisin}(1997)}]{CIRAC19971}%
  \BibitemOpen
  \bibfield  {author} {\bibinfo {author} {\bibfnamefont {J.}~\bibnamefont
  {Cirac}}\ and\ \bibinfo {author} {\bibfnamefont {N.}~\bibnamefont {Gisin}},\
  }\href {\doibase http://dx.doi.org/10.1016/S0375-9601(97)00176-X} {\bibfield
  {journal} {\bibinfo  {journal} {Physics Letters A}\ }\textbf {\bibinfo
  {volume} {229}},\ \bibinfo {pages} {1 } (\bibinfo {year} {1997})}\BibitemShut
  {NoStop}%
\bibitem [{\citenamefont {Bechmann-Pasquinucci}\ and\ \citenamefont
  {Gisin}(1999)}]{PhysRevA.59.4238}%
  \BibitemOpen
  \bibfield  {author} {\bibinfo {author} {\bibfnamefont {H.}~\bibnamefont
  {Bechmann-Pasquinucci}}\ and\ \bibinfo {author} {\bibfnamefont
  {N.}~\bibnamefont {Gisin}},\ }\href {\doibase 10.1103/PhysRevA.59.4238}
  {\bibfield  {journal} {\bibinfo  {journal} {Phys. Rev. A}\ }\textbf {\bibinfo
  {volume} {59}},\ \bibinfo {pages} {4238} (\bibinfo {year}
  {1999})}\BibitemShut {NoStop}%
\bibitem [{\citenamefont {Muralidharan}\ \emph {et~al.}(2014)\citenamefont
  {Muralidharan}, \citenamefont {Kim}, \citenamefont {L\"utkenhaus},
  \citenamefont {Lukin},\ and\ \citenamefont {Jiang}}]{PhysRevLett.112.250501}%
  \BibitemOpen
  \bibfield  {author} {\bibinfo {author} {\bibfnamefont {S.}~\bibnamefont
  {Muralidharan}}, \bibinfo {author} {\bibfnamefont {J.}~\bibnamefont {Kim}},
  \bibinfo {author} {\bibfnamefont {N.}~\bibnamefont {L\"utkenhaus}}, \bibinfo
  {author} {\bibfnamefont {M.~D.}\ \bibnamefont {Lukin}}, \ and\ \bibinfo
  {author} {\bibfnamefont {L.}~\bibnamefont {Jiang}},\ }\href {\doibase
  10.1103/PhysRevLett.112.250501} {\bibfield  {journal} {\bibinfo  {journal}
  {Phys. Rev. Lett.}\ }\textbf {\bibinfo {volume} {112}},\ \bibinfo {pages}
  {250501} (\bibinfo {year} {2014})}\BibitemShut {NoStop}%
\bibitem [{\citenamefont {D\"ur}\ \emph {et~al.}(1999)\citenamefont {D\"ur},
  \citenamefont {Cirac},\ and\ \citenamefont {Tarrach}}]{DCT99}%
  \BibitemOpen
  \bibfield  {author} {\bibinfo {author} {\bibfnamefont {W.}~\bibnamefont
  {D\"ur}}, \bibinfo {author} {\bibfnamefont {J.~I.}\ \bibnamefont {Cirac}}, \
  and\ \bibinfo {author} {\bibfnamefont {R.}~\bibnamefont {Tarrach}},\ }\href
  {\doibase 10.1103/PhysRevLett.83.3562} {\bibfield  {journal} {\bibinfo
  {journal} {Phys. Rev. Lett.}\ }\textbf {\bibinfo {volume} {83}},\ \bibinfo
  {pages} {3562} (\bibinfo {year} {1999})}\BibitemShut {NoStop}%
\bibitem [{\citenamefont {D\"ur}\ \emph {et~al.}(2000)\citenamefont {D\"ur},
  \citenamefont {Cirac}, \citenamefont {Lewenstein},\ and\ \citenamefont
  {Bru\ss{}}}]{Duer+00}%
  \BibitemOpen
  \bibfield  {author} {\bibinfo {author} {\bibfnamefont {W.}~\bibnamefont
  {D\"ur}}, \bibinfo {author} {\bibfnamefont {J.~I.}\ \bibnamefont {Cirac}},
  \bibinfo {author} {\bibfnamefont {M.}~\bibnamefont {Lewenstein}}, \ and\
  \bibinfo {author} {\bibfnamefont {D.}~\bibnamefont {Bru\ss{}}},\ }\href
  {\doibase 10.1103/PhysRevA.61.062313} {\bibfield  {journal} {\bibinfo
  {journal} {Phys. Rev. A}\ }\textbf {\bibinfo {volume} {61}},\ \bibinfo
  {pages} {062313} (\bibinfo {year} {2000})}\BibitemShut {NoStop}%
\bibitem [{\citenamefont {Liu}\ \emph {et~al.}(2016)\citenamefont {Liu},
  \citenamefont {Cao}, \citenamefont {Su}, \citenamefont {Xiong},\ and\
  \citenamefont {Yang}}]{Liu2016}%
  \BibitemOpen
  \bibfield  {author} {\bibinfo {author} {\bibfnamefont {T.}~\bibnamefont
  {Liu}}, \bibinfo {author} {\bibfnamefont {X.-Z.}\ \bibnamefont {Cao}},
  \bibinfo {author} {\bibfnamefont {Q.-P.}\ \bibnamefont {Su}}, \bibinfo
  {author} {\bibfnamefont {S.-J.}\ \bibnamefont {Xiong}}, \ and\ \bibinfo
  {author} {\bibfnamefont {C.-P.}\ \bibnamefont {Yang}},\ }\href
  {http://dx.doi.org/10.1038/srep21562} {\bibfield  {journal} {\bibinfo
  {journal} {Scientific Reports}\ }\textbf {\bibinfo {volume} {6}},\ \bibinfo
  {pages} {21562 EP } (\bibinfo {year} {2016})},\ \bibinfo {note}
  {article}\BibitemShut {NoStop}%
\bibitem [{\citenamefont {Ahlswede}\ \emph {et~al.}(2006)\citenamefont
  {Ahlswede}, \citenamefont {Cai}, \citenamefont {Li},\ and\ \citenamefont
  {Yeung}}]{Ahlswede00}%
  \BibitemOpen
  \bibfield  {author} {\bibinfo {author} {\bibfnamefont {R.}~\bibnamefont
  {Ahlswede}}, \bibinfo {author} {\bibfnamefont {N.}~\bibnamefont {Cai}},
  \bibinfo {author} {\bibfnamefont {S.~Y.}\ \bibnamefont {Li}}, \ and\ \bibinfo
  {author} {\bibfnamefont {R.~W.}\ \bibnamefont {Yeung}},\ }\href {\doibase
  10.1109/18.850663} {\bibfield  {journal} {\bibinfo  {journal} {IEEE Trans.
  Inf. Theor.}\ }\textbf {\bibinfo {volume} {46}},\ \bibinfo {pages} {1204}
  (\bibinfo {year} {2006})}\BibitemShut {NoStop}%
\bibitem [{\citenamefont {{Leung}}\ \emph {et~al.}(2006)\citenamefont
  {{Leung}}, \citenamefont {{Oppenheim}},\ and\ \citenamefont
  {{Winter}}}]{Leung06}%
  \BibitemOpen
  \bibfield  {author} {\bibinfo {author} {\bibfnamefont {D.}~\bibnamefont
  {{Leung}}}, \bibinfo {author} {\bibfnamefont {J.}~\bibnamefont
  {{Oppenheim}}}, \ and\ \bibinfo {author} {\bibfnamefont {A.}~\bibnamefont
  {{Winter}}},\ }\href@noop {} {\bibfield  {journal} {\bibinfo  {journal} {IEEE
  Trans. Inf. Theory}\ }\textbf {\bibinfo {volume} {56}},\ \bibinfo {pages}
  {3478} (\bibinfo {year} {2006})}\BibitemShut {NoStop}%
\bibitem [{\citenamefont {Hayashi}(2007)}]{Hayashi07}%
  \BibitemOpen
  \bibfield  {author} {\bibinfo {author} {\bibfnamefont {M.}~\bibnamefont
  {Hayashi}},\ }\href {\doibase 10.1103/PhysRevA.76.040301} {\bibfield
  {journal} {\bibinfo  {journal} {Phys. Rev. A}\ }\textbf {\bibinfo {volume}
  {76}},\ \bibinfo {pages} {040301} (\bibinfo {year} {2007})}\BibitemShut
  {NoStop}%
\bibitem [{\citenamefont {{Kobayashi}}\ \emph {et~al.}(2009)\citenamefont
  {{Kobayashi}}, \citenamefont {{Le Gall}}, \citenamefont {{Nishimura}},\ and\
  \citenamefont {{R\"otteler}}}]{Kobayashi09}%
  \BibitemOpen
  \bibfield  {author} {\bibinfo {author} {\bibfnamefont {H.}~\bibnamefont
  {{Kobayashi}}}, \bibinfo {author} {\bibfnamefont {F.}~\bibnamefont {{Le
  Gall}}}, \bibinfo {author} {\bibfnamefont {H.}~\bibnamefont {{Nishimura}}}, \
  and\ \bibinfo {author} {\bibfnamefont {M.}~\bibnamefont {{R\"otteler}}},\
  }\href@noop {} {\bibfield  {journal} {\bibinfo  {journal} {ArXiv e-prints}\ }
  (\bibinfo {year} {2009})},\ \Eprint {http://arxiv.org/abs/0902.1299}
  {arXiv:0902.1299 [quant-ph]} \BibitemShut {NoStop}%
\bibitem [{\citenamefont {{Kobayashi}}\ \emph {et~al.}(2010)\citenamefont
  {{Kobayashi}}, \citenamefont {{Le Gall}}, \citenamefont {{Nishimura}},\ and\
  \citenamefont {{R\"otteler}}}]{Kobayashi10}%
  \BibitemOpen
  \bibfield  {author} {\bibinfo {author} {\bibfnamefont {H.}~\bibnamefont
  {{Kobayashi}}}, \bibinfo {author} {\bibfnamefont {F.}~\bibnamefont {{Le
  Gall}}}, \bibinfo {author} {\bibfnamefont {H.}~\bibnamefont {{Nishimura}}}, \
  and\ \bibinfo {author} {\bibfnamefont {M.}~\bibnamefont {{R\"otteler}}},\
  }\href@noop {} {\bibfield  {journal} {\bibinfo  {journal} {ArXiv e-prints}\ }
  (\bibinfo {year} {2010})},\ \Eprint {http://arxiv.org/abs/1012.4583}
  {arXiv:1012.4583 [quant-ph]} \BibitemShut {NoStop}%
\bibitem [{\citenamefont {{de Beaudrap}}\ and\ \citenamefont
  {{R\"otteler}}(2014)}]{Beaudrap14}%
  \BibitemOpen
  \bibfield  {author} {\bibinfo {author} {\bibfnamefont {N.}~\bibnamefont {{de
  Beaudrap}}}\ and\ \bibinfo {author} {\bibfnamefont {M.}~\bibnamefont
  {{R\"otteler}}},\ }\href@noop {} {\bibfield  {journal} {\bibinfo  {journal}
  {ArXiv e-prints}\ } (\bibinfo {year} {2014})},\ \Eprint
  {http://arxiv.org/abs/1403.3533} {arXiv:1403.3533 [quant-ph]} \BibitemShut
  {NoStop}%
\bibitem [{\citenamefont {Satoh}\ \emph
  {et~al.}(2016{\natexlab{b}})\citenamefont {Satoh}, \citenamefont {Ishizaki},
  \citenamefont {Nagayama},\ and\ \citenamefont {Van~Meter}}]{Satoh16}%
  \BibitemOpen
  \bibfield  {author} {\bibinfo {author} {\bibfnamefont {T.}~\bibnamefont
  {Satoh}}, \bibinfo {author} {\bibfnamefont {K.}~\bibnamefont {Ishizaki}},
  \bibinfo {author} {\bibfnamefont {S.}~\bibnamefont {Nagayama}}, \ and\
  \bibinfo {author} {\bibfnamefont {R.}~\bibnamefont {Van~Meter}},\ }\href
  {\doibase 10.1103/PhysRevA.93.032302} {\bibfield  {journal} {\bibinfo
  {journal} {Phys. Rev. A}\ }\textbf {\bibinfo {volume} {93}},\ \bibinfo
  {pages} {032302} (\bibinfo {year} {2016}{\natexlab{b}})}\BibitemShut
  {NoStop}%
\bibitem [{\citenamefont {Kraus}\ \emph {et~al.}(2005)\citenamefont {Kraus},
  \citenamefont {Gisin},\ and\ \citenamefont {Renner}}]{RGK05PRL}%
  \BibitemOpen
  \bibfield  {author} {\bibinfo {author} {\bibfnamefont {B.}~\bibnamefont
  {Kraus}}, \bibinfo {author} {\bibfnamefont {N.}~\bibnamefont {Gisin}}, \ and\
  \bibinfo {author} {\bibfnamefont {R.}~\bibnamefont {Renner}},\ }\href@noop {}
  {\bibfield  {journal} {\bibinfo  {journal} {Phys. Rev. Lett.}\ }\textbf
  {\bibinfo {volume} {95}},\ \bibinfo {pages} {080501} (\bibinfo {year}
  {2005})}\BibitemShut {NoStop}%
\end{thebibliography}%
\vspace*{3ex}

\noindent {\bf\large Acknowledgments}\\
We acknowledge helpful discussions with Jan B\"orker and Norbert L\"utkenhaus. This work was financially supported by BMBF (network Q.com-Q) and ARL.\\

\appendix
\section{The resource state and its properties}\label{app:GHZ}
In this section we
derive the form of a pure quantum state that fulfils the requirements of
perfect correlations for one set of local measurement bases, with
uniformly distributed random measurement outcomes.
(These local bases are used for the key generation.)
We also prove  properties of the resource state
regarding correlations of measurement outcomes in any other set
of local bases.\\
A general normalized $N$-qubit state reads
\begin{equation}
\ket{\phi} = \sum_{i_1,i_2,...i_N=0}^1 a_{i_1,i_2,...i_N}\ket{i_1,i_2,...i_N}\ ,
\end{equation}
with complex coefficients $a_{i_1,i_2,...i_N}$
that satisfy
$\sum_{i_1,i_2,...i_N=0}^1 |a_{i_1,i_2,...i_N}|^2=1$.
To achieve perfect correlations,
we can assume without loss of generality that all parties measure
in the $Z$-basis and  get
the same outcome, as the choice of another local basis corresponds
to a local rotation,
and an opposite outcome could
be flipped locally. The requirement of perfect correlations in the $Z$-basis is only fulfilled by a
quantum correlated state of the form
\begin{equation}
\ket{\phi_{corr}} = a_{0,...,0}\ket{0,...,0}+
a_{1,...,1}\ket{1,...,1}\ .
\label{resource}
\end{equation}

It turns out that this requirement of perfect correlations in
one set of local bases forbids perfect correlations,
even only pairwise, in any other local bases,
for all $N\ge 3$.
\begin{theorem}
	For $N$ qubits, with
	$N\ge 3$, the state $\ket{\phi_{corr}} = a_{0,...,0}\ket{0,...,0}+
	a_{1,...,1}\ket{1,...,1}$  leads to perfect classical correlations
	between any number of parties, if and only if each of them
	measures in the $Z$-basis.
\end{theorem}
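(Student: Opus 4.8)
The plan is to prove the nontrivial (``only if'') direction by reducing the general statement to the case of two parties, where the claim becomes a short calculation with a classically correlated two-qubit state. The easy (``if'') direction is immediate: if each of the selected parties measures in the $Z$-basis, then since $\ket{\phi_{corr}}$ is supported only on $\ket{0,\dots,0}$ and $\ket{1,\dots,1}$, every such party reads out the same bit, so their outcomes are perfectly (positively) correlated.

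First I would reduce to pairs. Suppose a set $S$ of $k\ge 2$ parties has perfectly correlated outcomes, meaning each outcome is a deterministic function of any other. For binary outcomes such a function is a bijection on $\{0,1\}$, so any two parties in $S$ are again perfectly correlated. It therefore suffices to show that whenever two fixed parties $i,j$ are perfectly correlated both of them measure in the $Z$-basis; applying this to every pair in $S$ then forces all of $S$ to measure in $Z$.

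The key step, and the only place where $N\ge 3$ enters, is to compute the reduced two-qubit state on parties $i,j$. Tracing out the remaining $N-2\ge 1$ qubits, the coherence term $\ket{0,\dots,0}\bra{1,\dots,1}$ is annihilated, because on each traced-out qubit it contributes the factor $\braket{1}{0}=0$. Hence
\[
\rho_{ij}=|a_{0,\dots,0}|^2\,\proj{00}+|a_{1,\dots,1}|^2\,\proj{11},
\]
a diagonal, classically correlated state. For $N=2$ no qubit is traced out, the coherence survives, and a Bell state is indeed perfectly correlated in several bases, so the hypothesis $N\ge 3$ is essential.

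Then I would parametrize each local measurement by a basis $\{\ket{u},\ket{u^\perp}\}$ and note that, because $\rho_{ij}$ carries no coherence, the joint outcome distribution depends only on the overlaps $c_i=|\braket{u_i}{0}|^2$ and $c_j=|\braket{u_j}{0}|^2$ (the azimuthal phases drop out). This distribution is the convex combination, with weights $p=|a_{0,\dots,0}|^2$ and $q=|a_{1,\dots,1}|^2$, of the two product distributions coming from $\ket{00}$ and $\ket{11}$. Writing out the four joint probabilities and imposing that the two ``uncorrelated'' entries vanish gives, since all terms are nonnegative and $p,q>0$, equations such as $p\,c_i(1-c_j)=0$ and $q\,(1-c_i)c_j=0$; a short case analysis forces $c_i,c_j\in\{0,1\}$ with $c_i=c_j$, i.e. $\ket{u_i},\ket{u_j}\in\{\ket{0},\ket{1}\}$ up to a phase, which is precisely a $Z$-basis measurement. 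The main obstacle is handling this final algebra cleanly together with the degenerate cases: if $a_{0,\dots,0}$ or $a_{1,\dots,1}$ vanishes then $\ket{\phi_{corr}}$ is a product state and is trivially correlated in every basis, so these must be excluded (the resource state has both coefficients nonzero), and one must also check that the perfectly anti-correlated alternative ($P(0,0)=P(1,1)=0$) yields the same conclusion.
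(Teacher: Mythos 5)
Your proof is correct, and it takes a genuinely different route from the paper's. The paper never forms a reduced density matrix: it stays with the global pure state and computes two-party correlation functions, observing that
\[
\bra{\phi_{corr}}\sigma_i^\alpha\otimes\sigma_j^\beta\ket{\phi_{corr}}=0\quad\text{unless }\alpha=\beta=z,
\]
because for $N\ge 3$ every other Pauli combination maps $\ket{\phi_{corr}}$ to an orthogonal state; hence for arbitrary directions the correlation $\bracket{\phi_{corr}}{(\vec M_i\cdot\vec\sigma)\otimes(\vec M_j\cdot\vec\sigma)}{\phi_{corr}}=M_i^zM_j^z$, while perfect (anti-)correlation of binary outcomes forces this expectation to equal $\pm1$, so $|M_i^z|=|M_j^z|=1$ and both parties measure $Z$. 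Your argument rests on the same structural fact --- for $N\ge 3$ no coherence between the two branches survives on any pair of qubits --- but runs it through the reduced state $\rho_{ij}=|a_{0,\dots,0}|^2\proj{00}+|a_{1,\dots,1}|^2\proj{11}$ and an explicit analysis of the four joint outcome probabilities. Your version makes the role of $N\ge 3$ (at least one qubit must be traced out) and the treatment of anti-correlation/relabelling completely explicit, using only elementary probability; the paper's version is more compact, covering all measurement directions and both signs of correlation in one line of algebra with no case analysis. Both proofs use the same reduction from $k$-party to pairwise correlations.

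One point to repair: your justification for excluding the degenerate case is false. If $a_{1,\dots,1}=0$, the product state $\ket{0,\dots,0}$ is \emph{not} ``trivially correlated in every basis'' --- its measurement outcomes are independent, so away from the $Z$-basis they are uncorrelated, and the theorem actually still holds for it. The good news is that no exclusion is needed: perfect correlation requires \emph{both} off-diagonal joint probabilities to vanish,
\[
p\,c_i(1-c_j)+q\,(1-c_i)c_j=0
\qquad\text{and}\qquad
p\,(1-c_i)c_j+q\,c_i(1-c_j)=0 ,
\]
and since $p+q=1$ their sum gives $c_i(1-c_j)+(1-c_i)c_j=0$, which already forces $(c_i,c_j)\in\{(0,0),(1,1)\}$ irrespective of whether $p$ or $q$ vanishes (and analogously for the anti-correlated case). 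With that adjustment your argument covers the full statement, degenerate coefficients included, just as the paper's expectation-value argument does.
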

\begin{proof}
	Measuring in the $Z$-basis, perfect correlations follow
	trivially. For the reverse implication,
	let us denote the direction of measurement  for party $i$ by the vector
	$\vec{M_i}$, with components $M_{i}^{x}, M_{i}^{y}$ and $M_{i}^{z}$.
	An observable ${\cal M}_{ij}$ of two parties $i$ and $j$ is given by
	\begin{equation}
	\label{twopartymeas}
	{\cal M}_{ij} = (\vec{M_i}\cdot \vec{\sigma}) \otimes (\vec{M_j}\cdot \vec{\sigma})
	= \sum_{\alpha,\beta\in \{x,y,z\}}M_{i}^{\alpha}M_{j}^{\beta}\sigma_i^\alpha \otimes
	\sigma_j^\beta ,
	\end{equation}
	where $\vec{\sigma}$ denotes the vector of Pauli matrices and the identity
	operators for the parties $\neq i,j$ are omitted.
	Observe that
	\begin{equation}
	\label{expectvalue}
	\bra{\phi_{corr}}\sigma_i^\alpha \otimes
	\sigma_j^\beta\ket{\phi_{corr}} = 0 \ \ \text{unless} \ \ \alpha=\beta=z ,
	\end{equation}
	because all other combinations of Pauli operators change $\ket{\phi_{corr}}$ to
	an orthogonal state.
	
	Denoting by $p_i^{\alpha}(\pm)$
	the probability that party $i$ finds eigenvalue
	$\pm 1$ when measuring $\sigma^\alpha$,
	we also have $\bra{\phi_{corr}}\sigma_i^\alpha \otimes
	\sigma_j^\beta\ket{\phi_{corr}}= 2[p_i^{\alpha}(+)p_j^{\beta}(+)+
	p_i^{\alpha}(-)p_j^{\beta}(-)] -1$, and thus
	$p_i^{\alpha}(+)p_j^{\beta}(+)+
	p_i^{\alpha}(-)p_j^{\beta}(-) \neq 1$, unless
	$\alpha = \beta = z$.
	Therefore, perfect correlations between two parties
	are not possible in any other than the $Z$-basis.
	This also excludes perfect correlations
	between any other number of parties.
	- Note that the above argument, in particular Eq.~(\ref{expectvalue}), does not hold for
	$N=2$, which is special.
\end{proof}

Thus, any state of the form (\ref{resource}) contains the resource of perfect
multipartite correlations in the local $Z$-bases.
In order to ensure uniformity
of the outcome, i.e. randomness of the resulting secure bit string,
we choose  for the key generation protocol $|a_{0,...,0}| = 1/\sqrt{2} =
|a_{1,...,1}|$, i.e. the unique perfect resource is a GHZ state~\cite{GHZ07}.

\section{Security analysis of the NQKD protocol}\label{app:security}
In this appendix we generalise the composable security definition of the bipartite
scenario~\cite{Renner05,Sca+09} to  the $N$-partite case. As mentioned in the main text, the security analysis proceeds along analogous lines as the bipartite case in \cite{RGK05,RGK05PRL}.  We assume that the parties $A$ and $B_i$, for $i=1,...,N-1$  share $n$ multipartite
states. The eavesdropper $E$  is supposed to hold a purification of the global state. The total quantum state after $Z$-measurement of $A$ and all $B_i$ 
is described by the density operator
\begin{equation}
\begin{aligned}
\rho^n_{{\bf K K_1...K_{N-1}}E} =& \sum_{\bf{x,x_1,...,x_{N-1}}}
P_{{\bf K K_1...K_{N-1}}}(\bf{x,x_1,...,x_{N-1}}) \\
&\proj{\bf{x}}\otimes
\bigotimes_{i=1}^{N-1}\proj{\bf{x_i}}\otimes
\rho_E^{\bf x,x_1,...x_{N-1}} \ ,
\end{aligned}
\end{equation}
where ${\bf x}$ and  ${\bf x_i}$
describe the classical strings of parties A and $B_i$, respectively.
Note that the classical post-processing is identical to the bipartite
case: In an error correction step the parties transform their only partially correlated 
raw data into a fully correlated shorter string. Party A pre-processes her random string ${\bf K}$ 
according to the channel
${\bf U}\leftarrow {\bf K}$ and sends classical error correction
information ${\bf W}$ to parties
$B_i$, who compute their respective guesses ${\bf U_i}$ for ${\bf U}$
from ${\bf K_i}$ and ${\bf W}$. The error correction information ${\bf W}$ is the same for all Bobs, 
thus there is no additional information leakage compared to the bipartite case.
In a second step, the privacy amplification, Party A randomly chooses $f$ from a two-universal family 
of hash functions, computes her key  ${\bf S_A}= f({\bf U})$
and sends the description of $f$ to all
parties $B_i$ who also perform the privacy amplification to arrive at
their respective keys  ${\bf S_{B_i}}=
f({\bf U_i})$. The total quantum
state will then be denoted as $\rho_{{\bf S_A S_{B_1}...S_{B_{N-1}}}E`}$. The key tuple
(${\bf S_A,  S_{B_1},..., S_{B_{N-1}}}$) is called
$\epsilon$-secure, if it is $\epsilon$-close to the ideal state, i.e.\
if
\begin{equation}
\delta(\rho_{{\bf S_A S_{B_1}...S_{B_{N-1}}}E`}, \rho_{\bf SS...S}
\otimes \rho_{E`}) \leq \epsilon \ ,
\end{equation}
where $\delta(\rho,\sigma)=  \text{tr}|\rho - \sigma|/2$
denotes the trace distance.

Note that we have not assumed any symmetry about the quality
of the channels connecting A and $B_i$. The information leaking
to the eavesdropper in the error correction step is determined by the amount of error correction information which the 
Bob with the noisiest channel requires. This is the main difference with respect to the bipartite
case.

Therefore we arrive at the following
key length $\ell^{(n)}$, generated from $n$ multipartite entangled
states, in analogy to  \cite{RGK05,RGK05PRL}:
\begin{equation}
\label{keylength}
\ell^{(n)} = \sup_{{\bf U} \leftarrow {\bf K}}[S_2^\epsilon ({\bf U} E)
-S_0^\epsilon(E) - \max_{i\in\{1,...N-1\}} H_0^\epsilon({\bf U}|{\bf K}_i)]\ ,
\end{equation}
where the smooth R\'enyi entropy $S_\alpha^\epsilon$ is defined as
\begin{equation}
S_{\alpha}^{\epsilon}(\rho)=\frac{1}{1-\alpha} \log_2 (\inf_{\sigma\in \mathbf{B}^\epsilon(\rho)}\tr(\sigma^\alpha)),
\end{equation}
which for $\alpha\in\{0,\infty\}$ is to be understood as $S_\alpha^\epsilon(\rho)=\lim_{\beta\rightarrow\alpha} S_\beta^\epsilon(\rho)$. The infimum is to be taken over all states $\sigma$ in a ball with radius $\varepsilon$ (w.r.t. the trace distance) around $\rho$, denoted as $\mathbf{B}^\epsilon(\rho)$.
For a (classical) probability distribution $P$ the smooth R\'enyi entropy is
\begin{equation}
H_\alpha^\epsilon(P)=\frac{1}{1-\alpha} \inf_{\substack{Q\\\bar{\delta}(Q,P)\leq \epsilon}} \log_2(\sum_z Q(z)^\alpha),
\end{equation}
where the infimum is taken over all probability distributions $\epsilon$-close to $P$ in the sense of the statistical distance $\bar{\delta}$ (the classical analogon of the trace distance).
The conditional smooth R\'enyi entropy is
\begin{equation}
H_0^\epsilon(P_X|P_Y)=\max_y H_0^\epsilon (P_{X|Y=y}).
\end{equation}
Note that, differently to the bipartite case~\cite{RGK05},
the worst of the $N-1$ channels influences the key length via
the maximal leakage to the eavesdropper in the error correction step, see the last term of Eq.~(\ref{keylength}). In the following the symbols $K$, $K_i$ and $U$ denote the single bit random variables corresponding to the respective bold-face strings.
For the limit $n\rightarrow \infty$ the secret fraction $r$ is given by
\begin{eqnarray}
r_{\infty} &=&\lim_{n \rightarrow \infty}\frac{\ell^{(n)}}{n} \nonumber \\
&=&
\sup_{ U \leftarrow K}\inf_{\sigma_{A\{B_i\}} \in \Gamma}
[S (U|E) - \max_{i\in\{1,...N-1\}} H( U|K_i)]
\ , \nonumber \\
&&
\end{eqnarray}
where $S(U|E)$ is the conditional von Neumann entropy, $H(U|K)$ is the conditional Shannon entropy and $\Gamma$ is the set of all density matrices of Alice and the Bobs which are consistent with the parameter estimation.

\section{Details for the network coding example}\label{app:QNC}
Here we explicitly describe the distribution of the GHZ state in the network of Fig.~\ref{fig:router}. This is a special case of the quantum network coding scheme which some of the authors described in \cite{EKB16b}. Let $\ket{+}=\frac{1}{\sqrt{2}}(\ket{0}+\ket{1})$ and $\ket{-}=\frac{1}{\sqrt{2}}(\ket{0}-\ket{1})$.\\
\begin{enumerate}
	\item Alice produces two qubits $C$ and $A$, each in the state $\ket{+}$. She then applies a controlled-Phase gate $C_Z=\proj{0}\otimes \1 + \proj{1}\otimes Z$ to produce the Bell state
	\begin{equation}
	\ket{\includegraphics[width=6ex]{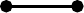}}_{CA}=\frac{1}{\sqrt{2}}(\ket{0+}+\ket{1-}).
	\end{equation}
	\item Alice sends the qubit $C$ to the router station.
	\item The router produces $
	(N-1)$ qubits $B_i$, $i=1,2,...,N-1$, in the state $\ket{+}$ and entangles each of them with the qubit $C$ using $(N-1)$ $C_Z$ gates. At this stage the total state is
	\begin{align}
	\ket{\psi_C}=&\frac{1}{\sqrt{2}}(\ket{0+...+}+\ket{1-...-})\\
	=&\frac{1}{\sqrt{2}}(\ket{+}_C\ket{GHZ'}+\ket{-}_CX_{B_1}\ket{GHZ'}),
	\end{align}
	where
	\begin{align}
	\ket{GHZ'}=&\frac{1}{\sqrt{2}}(\ket{++...+}+\ket{--...-})
	\end{align}
	is the GHZ state in the $X$-basis.
	\item The router measures $C$ in $X$ basis. If the outcome is $-1$, i.e. $\ket{-}_C$, then it applies $X_{B_1}$. The state is now $\ket{\pm}_C\ket{GHZ'}$.
	\item The router now distributes the qubits $B_1$, $B_2$, ..., $B_{N-1}$ to the corresponding parties.
\end{enumerate}
Up to a local basis choice (Hadamard gate), the resource state of the main text has been distributed and the multipartite entanglement based quantum key distribution (NQKD) protocol can be performed.\\
To see that it is impossible to create $N-1$ Bell pairs by sending a single qubit from Alice to the router, let us group the router and all Bobs into a single party $B$. When Alice sends one qubit across the channel, the entropy of entanglement $E_{A|B}\leq 1$. The $N-1$ Bell pairs, however, have entropy of entanglement $E_{A|B}=N-1$, so they cannot be created from the received state by local operations on $B$. Instead, $N-1$ network uses are necessary and the key rate decreases accordingly.
\FloatBarrier

\section{Gate error rates and the QBER}\label{app:gates}
In this appendix we give details for the key rate calculations regarding the quantum networks of Figs. \ref{fig:multipartite} and \ref{fig:router} with imperfect gates. We start with the simple network of Fig.~\ref{fig:multipartite}. The GHZ resource state is prepared as follows. Alice starts with the state $\ket{+}_A\ket{0}^{\otimes N-1}$ and applies a controlled-Not gate from $A$ to each of the other qubits, see Fig.~\ref{fig:circuit}. %
\begin{figure}[tbp]%
	\centering%
	\includegraphics[scale=1.0]{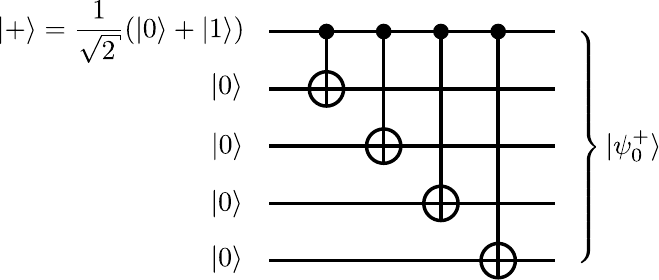}%
	\caption{The GHZ state that is to be distributed across the network of Fig.~\ref{fig:multipartite} can be produced by Alice using controlled-Not gates as depicted in this quantum circuit diagram.}\label{fig:circuit}%
\end{figure}%
When a controlled-Not gate acts on qubits $i$ (control) and $j$ (target) we denote it by
\begin{equation}
C_X^{(i,j)}=(\proj{0}_i \otimes \1_j+\proj{1}_i \otimes X_j)\otimes \1_{\mathrm{rest}},
\end{equation}
where $X=\ket{0}\bra{1}+\ket{1}\bra{0}$ is a Pauli matrix. We use a depolarizing noise model for the gate errors. The action of the imperfect gate on the density matrix is
\begin{align}
C_{X,f_G}^{(i,j)}(\rho)=& (1-f_G) C_X^{(i,j)} \rho C_X^{(i,j)} + f_G \tr_{i,j}(\rho) \otimes \1_{ij}\\
=&(1-f_G) C_X^{(i,j)} \rho C_X^{(i,j)} + \sum_{a,b\in\sigma} a_i b_j \rho a_i b_j,
\end{align}
where $\sigma=\{\1,X,Y,Z\}$ contains Pauli matrices.\\
It will be convenient to extend the notation of the GHZ basis to include the number of parties as a subscript, i.e.
\begin{equation}
\ket{\psi_{j,N}^\pm} = \frac{1}{\sqrt{2}}(\ket{0}_{1}\ket{j}_{2...N}\pm \ket{1}_{1}\ket{\bar j}_{2...N}). \label{eq:ghz-basisext}
\end{equation}
The initial state is
\begin{equation}
\rho_{\mathrm{in}} = \proj{\psi_{0,1}^+} \otimes (\proj{0})^{\otimes(N-1)}.
\end{equation}
The first gate turns it into
\begin{equation}
\rho_{1,\mathrm{out}} = ((1-f_G) \proj{\psi_{0,2}^+}+f_G \frac{\1}{4})\otimes (\proj{0})^{\otimes N-2},
\end{equation}
the second into
\begin{align}
\rho_{2,\mathrm{out}} = &((1-f_G)^2 \proj{\psi_{0,3}^+}\nonumber \\
+&(1-f_G) f_G \frac{1}{2} (\proj{0}\otimes\frac{\1}{2}\otimes\proj{0}+\proj{1}\otimes\frac{\1}{2}\otimes\proj{1})\nonumber\\
+&f_G \frac{\1}{8})\otimes (\proj{0})^{\otimes N-3},
\end{align}
and the third into
\begin{align}
\rho_{3,\mathrm{out}}=&((1-f_G)^3\proj{\psi_{0,4}^+}\nonumber\\
&+f_G(1-f_G)^2 \frac{1}{2}(\frac{\1}{2}\otimes \proj{00} \otimes \frac{\1}{2}+\frac{\1}{2}\otimes \proj{11} \otimes \frac{\1}{2})\nonumber\\
&+(1-f_G)^2 f_G \frac{1}{2} (\proj{0}\otimes\frac{\1}{2}\otimes\proj{00}+\proj{1}\otimes\frac{\1}{2}\otimes\proj{11})\nonumber\\
&+(1-f_G) f_G \frac{1}{2}(\proj{0}\otimes \frac{\1}{4} \otimes \proj{0}+\proj{1}\otimes \frac{\1}{4} \otimes \proj{1})\nonumber\\
&+(2-f_G)f_G^2 \frac{\1}{16}) \otimes (\proj{0})^{\otimes N-4}.
\end{align}
One may deduce the following observation. Let us denote the pattern of actual gate successes/failures as the binary representation of an $(N-1)$-bit number $\mathbf{x}$, where a $0$ at position $i$ indicates the failure of gate $i$ and $1$ means the corresponding gate was successful. The number of connected blocks of ones in the bit string $\mathbf{x}1$ plus the number of zeros, $b(\mathbf{x}1)$, is the number of subsets of parties that are correlated amongst each other. This gives the prefactor
\begin{equation}
c_\mathbf{x}=\left\{\begin{array}{cl}
1 & \text{if } \mathbf{x}=11...1\\
2^{-b(\mathbf{x}1)} &\text{else } \end{array}\right.
\end{equation}
in front of the corresponding term in $\rho$. These prefactors determine the overlap between $\ket{\psi_{j,0}^{\pm}}\bra{\psi_{j,0}^{\pm}}$ and $\rho$, i.e. the coefficients $\lambda_0^\pm(f_G)$ of $\rho$ in the GHZ basis. They read
\begin{equation}
\begin{aligned}
\lambda_0^+(f_G)=&\sum_{\mathbf{x}=0}^{2^{N-1}-1} c_\mathbf{x} f_G^{N-1-|\mathbf{x}|_H} (1-f_G)^{|\mathbf{x}|_H}\\
\text{and\hspace{0.8cm}}\lambda_0^-(f_G)=&\lambda_0^+-(1-f_G)^{N-1},
\end{aligned}\label{eq:applambda0fG}
\end{equation}
where $|\mathbf{x}|_H$ is the Hamming weight of $\mathbf{x}$. After some combinatorics $\sum_{\mathbf{x}} c(\mathbf{x})$ for a given weight $w=|\mathbf{x}|_H$ (unequal to $N-1$) can be expressed in a more compact form by summing over all possible ``subset counts'' $\beta$ as
\begin{equation}
\sum_{\substack{\mathbf{x}\\|\mathbf{x}|_H=w}} c(\mathbf{x})= \sum_{\beta=N-w}^{N}{w \choose N-\beta} {N-w-1 \choose \beta -N+w} 2^{-\beta},
\end{equation}
which leads to the relevant coefficients in the GHZ basis,
\begin{equation}
\begin{aligned}
\lambda_0^-(f_G)=&\sum_{w=0}^{N-2} c'(w) f_G^{N-1-w} (1-f_G)^{w}\\
\text{and }\lambda_0^+(f_G)=&(1-f_G)^{N-1}+\lambda_0^{-}(f_G)
\end{aligned}\label{eq:lambda0fG}
\end{equation}
with
\begin{equation}
c'(w)=\sum_{\beta=N-w}^{N}{w\choose N - \beta} {N - w - 1\choose \beta - N + w} 2^{-\beta}.
\end{equation}
From Eq.~(\ref{eq:lambda0fG}) one obtains the QBER using Eq.~(\ref{eq:QBER}). We show it in Fig.~\ref{fig:QoffG}.%
\begin{figure}[htbp]
	\centering \includegraphics[scale=0.9]{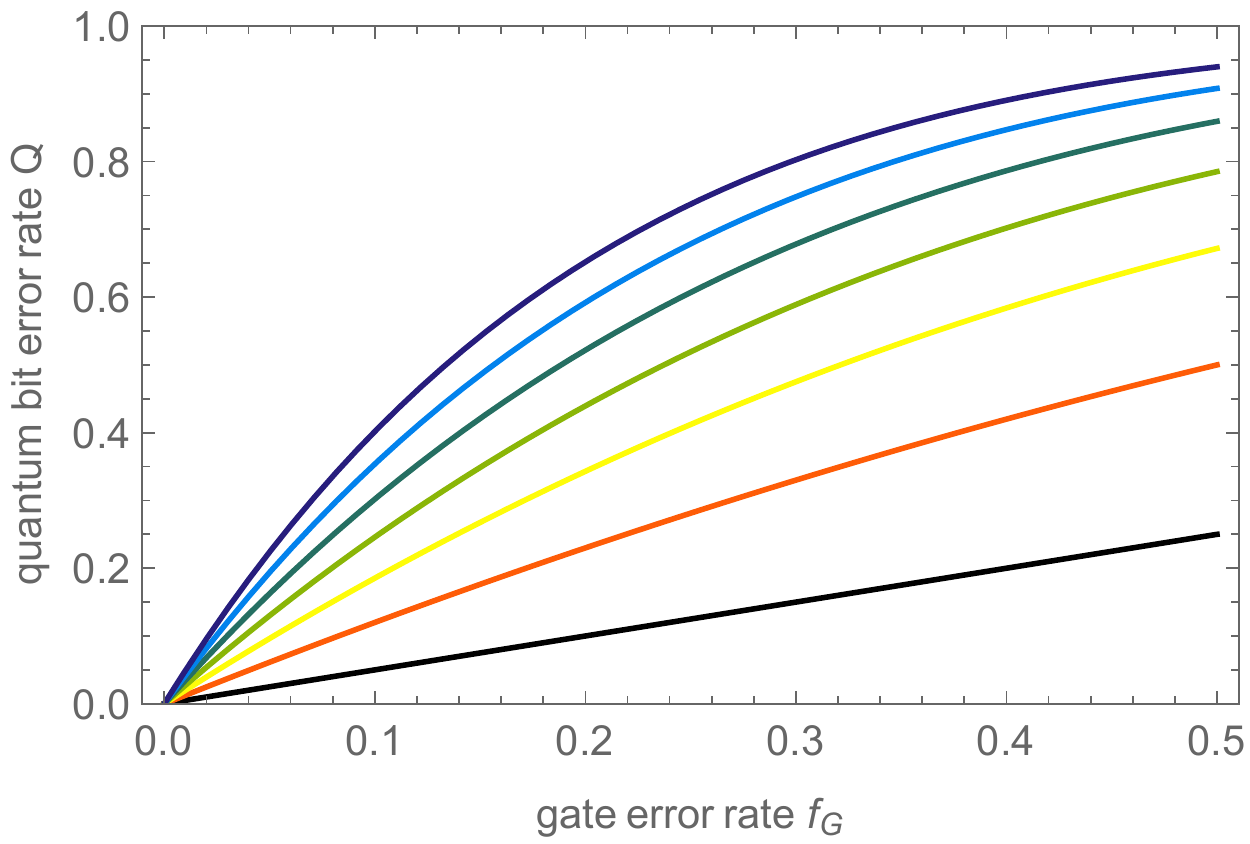}
	\caption{The QBER as a function of $f_G$ for the circuits described in the text, with $N=2,3,4,...,8$ (bottom to top).}\label{fig:QoffG}
\end{figure}
The secret key rate is calculated using Eq.~(\ref{eq:rdep}) with
\begin{align}
Q_{AB_i} =& \frac{1}{N-1}\sum_{k=1}^{N-1} \frac{1}{2}(1-(1- f_G)^k)\\
=& \frac{(1 - f_G)^N + f_G N-1}{2 f_G (N-1)},
\end{align}
which is the average $Q_{AB_i}$ for one to $N-1$ gates, because we use a random order of the gates. This effectively mixes all $\lambda_j^\pm$ with $j$ of same Hamming weight and accomplishes that all $Q_{AB_i}$ are equal. Compared to a fixed gate order it improves the key rate and removes the maximum in Eq.~(\ref{eq:rdep}).\\
In the case of the network shown in Fig.~\ref{fig:router}, $N-1$ gates are performed at $C$ and one additional gate is performed at $A$. The initial state at $C$ depends on whether the gate of $A$ was successful, i.e. it is
\begin{align}
\rho_{\mathrm{in,QNC}}=&(1-f_G) \rho_{\mathrm{in}} + f_G \frac{\1}{2}\otimes \proj{0}^{\otimes (N-1)}\\
=& (1-f_G) \rho_{\mathrm{in}} + f_G \frac{1}{2} (\rho_{\mathrm{in}}+Z_1 \rho_{\mathrm{in}} Z_1),
\end{align}
i.e.
\begin{align}
\lambda_{0,QNC}^+(f_G)=&(1-f_G) \lambda_{0}^+(f_G) + \frac{f_G}{2}(\lambda_0^+(f_G)+\lambda_0^-(f_G))\\
\lambda_{0,QNC}^-(f_G)=&(1-f_G) \lambda_{0}^-(f_G) + \frac{f_G}{2}(\lambda_0^+(f_G)+\lambda_0^-(f_G)),
\end{align}
and the previous results can be used to obtain the key rate in this case. Note that while the final density matrix depends on whether a router was used or not, the QBER (and $Q_{AB_i}$) does not, because the additional phase error does not contribute to it.\\
For a fixed number of parties $N$ there is a threshold gate error probability below which the NQKD protocol outperforms the bipartite approach in the quantum network of Fig.~\ref{fig:router}. These values are listed in Table~\ref{tab:minfG}. \begin{table}[tbp]%
	\caption{The multipartite entanglement based QKD protocol is more prone to gate errors, but requires Alice to send one qubit only. These two competing effects lead to a threshold value of the gate error probability $f_G$ below which it outperforms the bipartite approach.}\label{tab:minfG}%
	\begin{tabular}{cc}%
		$N$ & NQKD-threshold for $f_G$\\
		\hline%
		3 & 0.0725754 \\%
		4 & 0.0689939 \\%
		5 & 0.0618163 \\%
		6 & 0.0553032 \\%
		7 & 0.0498258 \\%
		8 & 0.0452567 \\%
		9 & 0.0414201 \\%
		10 & 0.0381659 \\%
	\end{tabular}\hspace{1cm}%
	\begin{tabular}{cc}%
		$N$ & NQKD-threshold for $f_G$\\
		\hline%
		11 & 0.0353766 \\%
		12 & 0.0329621 \\%
		13 & 0.0308531 \\%
		14 & 0.0289959 \\%
		15 & 0.0273484 \\%
		16 & 0.0258773 \\%
		17 & 0.024556 \\%
		18 & 0.0233626 \\%
	\end{tabular}%
\end{table}%

The key rate as a function of $N$ is shown for different values of the gate error rate $f_G$ in Fig.~\ref{fig:keyratesQNC}.
\begin{figure}[tb]%
	\centering%
	\includegraphics[scale=0.9]{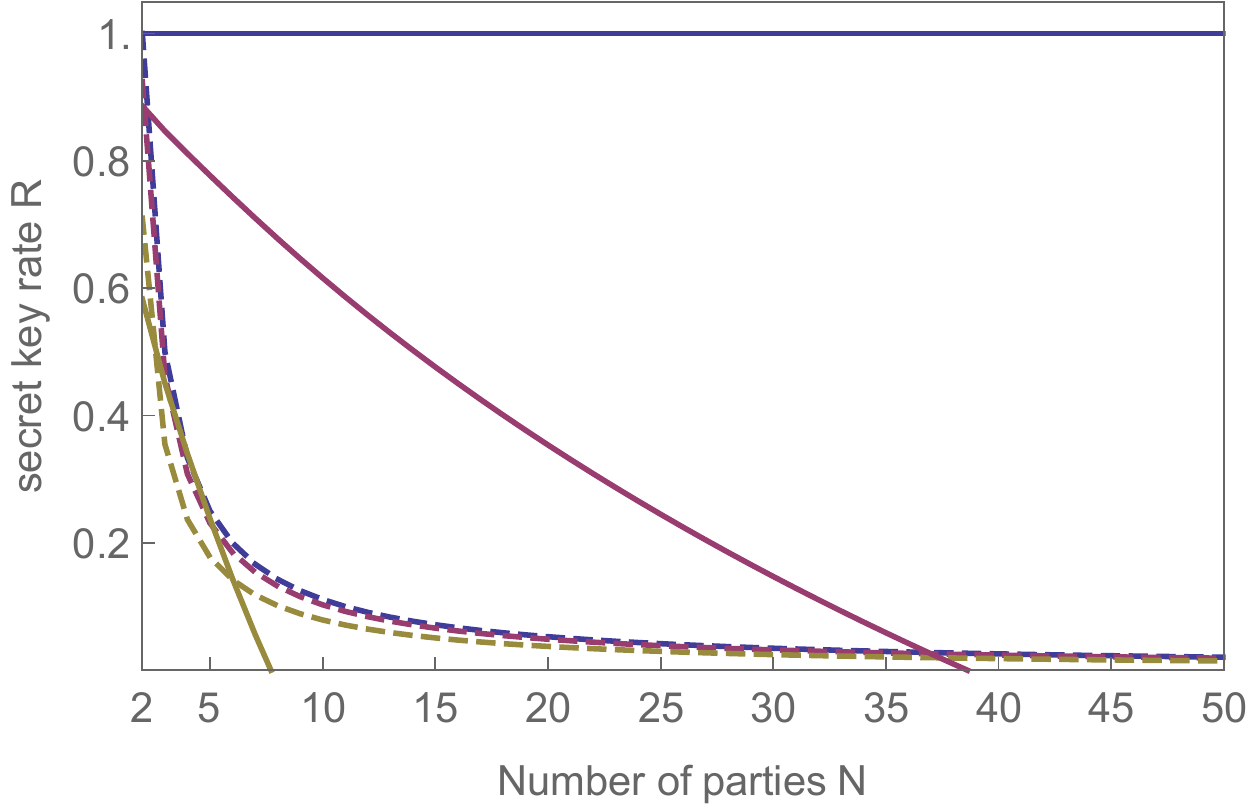}%
	\caption{The secret key rate for the multipartite entanglement based (NQKD) protocol (solid lines) in the quantum network shown in Fig.~\ref{fig:router} as a function of the number of parties $N$ for different values of the gate error probability $f_G=0\,\%,\, 1\,\% \text{ and }\,5\,\%$ (top to bottom). The key rate decreases with increasing $N$, because more imperfect gates are applied. The key rate of the bipartite entanglement based (2QKD) protocol (dashed lines), plotted for the same values of $f_G$, shows the $1/(N-1)$ scaling which is due to the bottleneck between $A$ and $C$.
	}
	\label{fig:keyratesQNC}%
\end{figure}%
\FloatBarrier
\section{Key distribution in the butterfly network}\label{app:butterfly}
\begin{figure}[htbp]
	\centering
	\subfigure[A (classical) linear network code\label{fig:butterflya}]{\hspace{1cm}\includegraphics[scale=1.2]{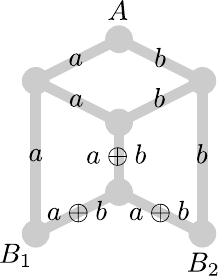}\hspace{1cm}}
	\subfigure[The corresponding quantum network code produces two GHZ states.\label{fig:butterflyb}] {\hspace{1cm}\includegraphics[scale=1.2]{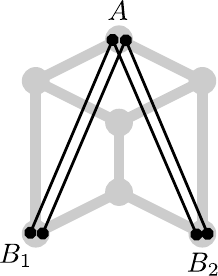}\hspace{1cm}}
	\caption[Sketch]{A butterfly network. In the classical case (a) A can send bits $a$ and $b$ to both $B_1$ and $B_2$ by employing the linear network code given by the transmitted symbols written onto the channels ($\oplus$ means XOR). This implies that the corresponding quantum network code produces two GHZ states (b), see Thm. 1 of \cite{EKB16b}.}
	\label{fig:butterfly}
\end{figure}
We sketch how the NQKD protocol can be employed in the butterfly network shown in Fig.~\ref{fig:butterfly}. As usual, the rate constraints on the channels are one, i.e. each channel can send a single qubit per time step.
\begin{enumerate}
	\item The quantum network code corresponding to the linear code shown in FIG.~S\ref{fig:butterflya} is employed to produce two GHZ states shared by $A$, $B_1$ and $B_2$ (FIG.~S\ref{fig:butterflyb}). See Thm. 1 of \cite{EKB16b}.
	\item These two GHZ states allow to perform two rounds of the NQKD protocol in a single time step.
\end{enumerate}
In contrast the bipartite entanglement based (2QKD) protocol (also in its prepare and measure formulation) can only do a single round, because only two Bell pairs can be distributed (due to the outgoing capacity at A). Thus the key rate of the NQKD protocol is twice as high as in the ``standard approach''.\\
From the construction of this example it is clear how it generalizes: If the network allows A to multicast $n$ bits, then a single use of the corresponding quantum network will produce $n$ GHZ states. Thus the NQKD protocol can be performed $n$ times per time step. However, the 2QKD protocol can only perform $\frac{n}{N-1}$ rounds in the same time.

\end{document}